\documentclass{IEEEtaes}

\usepackage{algorithm}
\usepackage{algpseudocode}
\usepackage{amsmath}
\usepackage{graphicx}

\usepackage{amssymb}
\usepackage{color,array,amsthm}
\usepackage{graphicx}
\usepackage{comment}
\usepackage[acronym,shortcuts]{glossaries}

\usepackage{subcaption}
\usepackage{tikz}
\usetikzlibrary{arrows,automata, positioning}
\usetikzlibrary{calc, chains,
                fit,
                positioning,
                shapes}
\usetikzlibrary{patterns} 
\usepackage{pgfplots}
\pgfplotsset{compat=1.17}

\newtheorem{prop}{Proposition}
\newtheorem{cor}{Corollary}
\newtheorem{theorem}{Theorem}

\newtheorem{Rem}{Remark}
\newtheorem{Exa}{Example}
\setcounter{page}{1}

\newacronym{AWGN}{AWGN}{Additive White Gaussian Noise}
\newacronym{BCH}{BCH}{Bose-Chaudhuri-Hocquenghem}
\newacronym{BPSK}{BPSK}{Binary Phase-Shift Keying}
\newacronym{CCSDS}{CCSDS}{Consultative Committee for Space Data Systems}
\newacronym{CER}{CER}{Codeword Error Rate}
\newacronym{CLTU}{CLTU}{Communications Link Transmission Unit}
\newacronym{CSP}{CSP}{Closest String Problem}
\newacronym{FSP}{FSP}{Farthest String Problem}
\newacronym{LDPC}{LDPC}{Low-Density Parity-Check}
\newacronym{LFSR}{LFSR}{Linear-Feedback Shift-Register}
\newacronym{LLR-SPA}{LLR-SPA}{Log-Likelihood Ratio Sum-Product Algorithm}
\newacronym{LRT}{LRT}{Likelihood Ratio Test}
\newacronym{ML}{ML}{Maximum Likelihood}
\newacronym{MSA}{MSA}{Min-Sum Algorithm}
\newacronym{NMSA}{NMSA}{Normalized Min-Sum Algorithm}
\newacronym{QC}{QC}{Quasi-Cyclic}
\newacronym{ROC}{ROC}{Receiver Operating Characteristic}
\newacronym{DTS}{RTS}{Receiver Tail Sequence}
\newacronym{S-LRT}{S-LRT}{Simplified Likelihood Ratio Test}
\newacronym{SNR}{SNR}{Signal-to-Noise Ratio}
\newacronym{SPA}{SPA}{Sum-Product Algorithm}
\newacronym{TC}{TC}{TeleCommand}
\newacronym{TS}{TS}{Tail Sequence}

\newcommand{\F}[1]{\textcolor{red}{#1}}

\begin{document}

\title{Design and Analysis of the Tail Sequence for Short LDPC-Coded Space Communications}

\author{MASSIMO BATTAGLIONI}
\member{Member, IEEE}
\affil{Università Politecnica delle Marche, 60131 Ancona, Italy, \& Consorzio Nazionale Interuniversitario per le Telecomunicazioni (CNIT),
43124 Parma, Italy} 

\author{KENNETH ANDREWS}
\affil{Jet Propulsion Laboratory, California Institute of Technology, Pasadena, CA, USA} 

\author{REBECCA GIULIANI}
\member{Student Member, IEEE}
\affil{Università Politecnica delle Marche, 60131 Ancona, Italy, \& Consorzio Nazionale Interuniversitario per le Telecomunicazioni (CNIT),
43124 Parma, Italy}

 \author{FABRIZIO MARINELLI}
\affil{Università Politecnica delle Marche, 60131 Ancona, Italy}

\author{FRANCO CHIARALUCE}
\member{Senior Member, IEEE}
\affil{Università Politecnica delle Marche, 60131 Ancona, Italy, \& Consorzio Nazionale Interuniversitario per le Telecomunicazioni (CNIT),
43124 Parma, Italy}

\author{MARCO BALDI}
\member{Senior Member, IEEE}
\affil{Università Politecnica delle Marche, 60131 Ancona, Italy, \& Consorzio Nazionale Interuniversitario per le Telecomunicazioni (CNIT),
43124 Parma, Italy}

\receiveddate{This article has been accepted in IEEE Transactions on Aerospace and Electronic Systems. To cite it, please use: \emph{M. Battaglioni, K. Andrews, R. Giuliani, F. Marinelli, F. Chiaraluce and M. Baldi, ``Design and Analysis of the Tail Sequence for Short LDPC-Coded Space Communications,'' in IEEE Transactions on Aerospace and Electronic Systems, doi: 10.1109/TAES.2025.3599498.} \\
``The work of Massimo Battaglioni, Fabrizio Marinelli, Marco Baldi, and Franco Chiaraluce was partially supported by the European Union - Next Generation EU under the Italian National Recovery and Resilience Plan (NRRP), Mission 4, Component 2, Investment 1.3, CUP J33C22002880001, partnership on “Telecommunications of the Future” (PE00000001 - program “RESTART”).'' This work was carried out in part at the Jet Propulsion Laboratory, California Institute of Technology, under contract with the National Aeronautics and Space Administration.}

\corresp{{\itshape (Corresponding author: M. Battaglioni)}.}

\authoraddress{Massimo Battaglioni, Rebecca Giuliani, Franco Chiaraluce, and Marco Baldi are with the Department of Information Engineering at Università Politecnica delle Marche, 60131 Ancona, Italy, \& Consorzio Nazionale Interuniversitario per le Telecomunicazioni (CNIT), 43124 Parma, Italy.
Fabrizio Marinelli is with the Department of Information Engineering at Università Politecnica delle Marche, 60131 Ancona, Italy. 
(e-mail: \href{mailto:m.battaglioni@univpm.it,r.giuliani@pm.univpm.it,fabrizio.marinelli@univpm.it,f.chiaraluce@univpm.it,m.baldi@univpm.it}{m.battaglioni@univpm.it, r.giuliani@pm.univpm.it, fabrizio.marinelli@univpm.it, f.chiaraluce@univpm.it, m.baldi@univpm.it}). Kenneth Andrews is with the Jet Propulsion Laboratory, California Institute of Technology, Pasadena, CA (e-mail: \href{mailto:kenneth.s.andrews@jpl.nasa.gov}{kenneth.s.andrews@jpl.nasa.gov).}}

\markboth{M. BATTAGLIONI ET AL.}{Tail Sequence for Short Communications}
\maketitle

\begin{abstract}
According to some standards for satellite communications, the transmitted stream is divided into transmission units with variable length, for which detecting the termination is  particularly relevant. This is the case of space TeleCommands (TCs), where coded data are usually preceded by a start sequence, and optionally followed by a tail sequence, forming the Communication Link Transmission Unit (CLTU).
Regarding the choice of schemes for error correction, the Consultative Committee for Space Data Systems recommendations for TC  synchronization and coding suggests to use, among others, two Low-Density Parity-Check (LDPC) codes: one (relatively) long and one short.
Adopting the long LDPC code eliminates the need for a tail sequence, as the LDPC decoder always fails when overrunning the end of the CLTU, thus causing the decoding and detection process to stop.
This, however, is not true when the short LDPC code is adopted, since its decoding might converge on a codeword even when the decoder input is not a noisy codeword.
This makes it necessary to use a tail sequence that causes the decoder to fail regardless of its input.
In this paper, we study the features required for such a sequence and propose some methods for its design.
Our numerical results, obtained considering various detection approaches for the tail sequence, show that the overall TC rejection probability improves significantly when the proposed tail sequence is employed. Our simulations also show that, for moderate values of the \ac{SNR}, with a properly designed tail sequence it is possible to obtain the same  performance  in terms of TC rejection probability using decoder-based detection and likelihood ratio test-based detection, with the former approach being less complex than the latter.
\end{abstract}

\begin{IEEEkeywords}Low-Density Parity-Check codes, satellite communications,
tail sequence, TeleCommand.
\end{IEEEkeywords}

\section{INTRODUCTION}

Deep space communications are gaining increasing attention as enablers of future space exploration missions. Given the extreme distances and challenging conditions, the success of these missions critically depends on the robustness and reliability of communication systems. 
The \ac{CCSDS} has provided comprehensive guidelines to delineate the architecture of satellite data streams organized in \acp{CLTU}, integrating advanced error correction coding \cite{bluebook}.
\Acp{TC}, which are essential for the operation and management of a spacecraft, rely heavily on the integrity of each received \ac{CLTU}. 
Regarding error correction coding techniques, these transmissions traditionally employ a modified \ac{BCH} code  (whose family was introduced in \cite{hocquenghem1959codes,bose1960class}), to which two state-of-the-art \ac{LDPC} codes were later added. The standard accounts for two \ac{LDPC} code variants: a shorter one, with block length of $128$ bits and dimension of $64$ bits, and a longer one, with block length of $512$ bits and dimension of $256$ bits.

\Acp{CLTU} are structured to include a start sequence, a series of encoded data codewords, and, depending on the coding scheme, an optional \ac{TS}. The inclusion of a termination mechanism becomes necessary when the length of the coded data is variable or not known a priori. A lightweight termination strategy involves appending a deliberately undecodable sequence as the \ac{TS}, relying on the decoder's inability to interpret it as a noisy version of a valid codeword and decode it into any codeword. The resulting decoding failure is then used to mark the end of the \ac{CLTU}. This approach is particularly effective for \ac{BCH} codes decoded with bounded-distance decoders, which are incomplete decoders and reliably reject any pattern beyond their correction capability. 
The situation is more complex when \ac{LDPC} codes are used in combination with iterative belief propagation decoders. These decoders are also incomplete and can thus fail to converge to a valid codeword, returning a decoding failure. This property can, in principle, be exploited to detect the end of the \ac{CLTU} without the need for a dedicated detector. In practice, this detecting strategy is particularly effective when  \ac{LDPC} codes with moderate to large block lengths are employed, as the decoder reliably fails on typical idle patterns (such as alternating bits) that follow the data. We have verified this behavior for the $(512,256)$ \ac{LDPC} code specified in~\cite{bluebook}, as well as for the $(256,128)$ \ac{LDPC} code described in~\cite{ccsds_orange}. 
 However, for short \ac{LDPC} codes (like the one with block length of $128$ bits), the limited minimum distance may prevent the decoder from failing on such sequences, unless the termination string is carefully designed. In fact, smaller minimum distance translates into a reduced ability to identify sequences that consistently lead to decoding failures across different channel conditions. In such cases, the use of a well-designed \ac{TS} becomes essential to avoid decoder overrun and to ensure robust detection.

\subsection{Our Contribution}

The design and analysis of \acp{TS} in space data systems employing short \ac{LDPC} codes is relatively unexplored in the literature. Some insights have been discussed in \ac{CCSDS} technical presentations \cite{bertinelli_2014, bertinelli_2015, slide_ken}, which contributed to the initial formulation of the \ac{TS} during the standardization process. A more detailed analysis of \ac{TS} performance was then presented in \cite{Baldi2016}, where the effectiveness of the standardized \ac{TS} was assessed in the context of \ac{BCH} coding. However, such a study did not extend to \ac{LDPC}-coded systems, leaving an open question regarding the \ac{TS} behavior under modern error correction schemes. Given the increasing adoption of short \ac{LDPC} codes in space communications, a dedicated investigation is necessary to assess their impact on \ac{TS} performance. 

This paper aims to fill such gap by proposing novel design methodologies for \acp{TS} to improve the reliability of short \ac{LDPC}-coded \Acp{TC}. We  perform  a theoretical analysis and comprehensive numerical simulations. In particular, we examine various detection mechanisms, such as correlators, \acp{LRT} \cite{massey1972optimum, Chiani2006, Pfletschinger2015}, and decoder-based strategies to assess the impact of different \ac{TS} detection methods on the overall system performance. The \acp{TS} obtained with the proposed method are effective even when different detection techniques are applied. This effectiveness derives from the observation that the proposed \acp{TS} can accommodate all distinct requirements such techniques have. For instance, tail detectors should operate on sequences that yield relatively small probabilities of false alarm (when operating on noisy codewords) and missed detection (when operating on noisy \acp{TS}) according to some metric, whereas decoder-based detectors rely on \acp{TS} that induce an ``uncorrectable'' error pattern ensuring, with high probability, that the decoder does not mistakenly recognize them as valid codewords. These requirements are not necessarily conflicting; rather, as mentioned, they can be simultaneously satisfied by appropriately designed \acp{TS}. 

The \ac{TC} rejection probability, that is, the probability that the \ac{TC} is not recognized by the satellite, serves as a quantifiable measure of the rate at which \acp{CLTU} are erroneously dismissed. In \cite{GiulianiAccess2025}, the performance of the system defined by the current standard is studied, but the \ac{TS} design is not faced. In this paper, by analyzing the performance implications of \ac{TS} design and detection, we aim to provide insights that significantly improve the robustness of \ac{TC} systems.

In fact, we show that it is possible to achieve the same performance in terms of \ac{TC} rejection rate as \ac{LRT}-based detectors by relying exclusively on the decoder already used for recovering noisy encoded data, thereby reducing both latency and receiver complexity. Unlike the \ac{LRT}-based approach, this method eliminates the need to switch between separate decoding and detection systems, making on-board \ac{TC} operations smoother. 
In fact, in the \ac{LRT}-based scheme, after detection of a preamble by the ``start'' detector, the receiver must alternate between the ``tail'' detector and the decoder on the same window of $n$ received symbols, invoking decoding only when tail detection fails. Moreover, the ``tail'' and ``start'' detectors are structurally different: the former jumps blockwise, while the latter operates sequentially, introducing further complexity in implementation and control. By contrast, the approach we consider unifies the detection and decoding steps through a single mechanism, leveraging the LDPC decoder alone. The only added cost is a one-time, off-line optimization of the \ac{TS}, which remains computationally affordable in the short-blocklength regime considered, especially since it must not be performed on board.

\subsection{Related works}

We observe that a strategy for designing an uncorrectable error pattern corresponds to finding a solution to the \ac{FSP}, i.e., a word that is as distant as possible from all codewords according to a chosen metric. The farthest string is the one (possibly not unique) that realizes the so-called covering radius. 
In fact, the covering radius of a set $P$ in a metric space is the smallest radius $r$ such that every point in the space lies within distance $r$ from some point in $P$. In a Hamming space, the maximum Hamming distance to the nearest codeword gives the covering radius.
The \ac{FSP} is also connected to the \ac{CSP}, since it is well known that the covering radius in an $n$-dimensional Hamming space corresponds to $n$ minus the Chebyshev radius of the smallest ball that contains all the codewords, 
whose center is referred to as the Chebyshev center, or equivalently, the closest string.  
Computing the farthest and the closest string from a set of $m$ $b$-ary strings ($b\ge 2$) is generally NP-hard \cite{Lanctot2003}. The covering radius problem has been shown to be NP-hard both for an arbitrary code, explicitly represented by  $m$ codewords \cite{Frances1997}, and for a linear code specified by its parity-check matrix \cite{McLoughlin1984,Guruswami2005}. In particular, McLoughlin showed that even verifying an upper bound for the covering radius is $\Pi_2^p$-complete, and that computing the covering radius is both $\Pi_2^p$-hard and $\Sigma_2^p$-hard  \cite{McLoughlin1984}.\\ 

A polynomial-time $(1 - \epsilon)$-approximation scheme (PTAS) with a time complexity of $m^{O\left(\frac{1}{\epsilon^2}\right)}$ for the related Chebyshev radius problem was recently proposed in \cite{Mazumdar2013}. Since the algorithm is polynomial-time only for linear codes whose dimension $k$ is logarithmic in the block length $n$, which generally are of little practical interest, it would be interesting to study the approximability of \ac{FSP} on \ac{LDPC} codes, which have sparser parity-check matrices than general linear codes but still have exponentially many codewords, and are of much greater practical relevance.

The solutions of the \ac{CSP} and \ac{FSP} have been widely investigated in the literature (see, among many others, \cite{Li2002,Meneses2004, Dinu2019}). However, most of these approaches do not take into account the fact that the strings under consideration may be codewords. This limits the applicability of the proposed techniques in real-world satellite communication scenarios and highlights the necessity of a search algorithm that leverages the code structure.

\subsection{Paper outline}

In Section \ref{sec:preli} we establish the necessary notation and preliminaries. In Section \ref{sec:tsdes} we deal with the \ac{TS} design. Section \ref{sec:numres} contains performance assessments, with particular emphasis on the \ac{TC} rejection probability.  
Finally, Section \ref{sec:concl} provides concluding remarks.

\section{Preliminaries}\label{sec:preli}

In this section, we introduce the notation used throughout the paper and we describe the considered communication system.

\subsection{Notation}\label{subsec:not}

We use the notation $[a,b]$ to represent the set of integers between $a$ and $b$, endpoints included. 
To denote vectors and matrices, we use bold lowercase and uppercase letters, respectively. $\mathbf{I}_M$ is the $M \times M$ identity matrix, $\mathbf{\Phi}^i$ is the $i$-th right circular shift of $\mathbf{I}_M$, where ${0\leq i \leq M-1}$,  $\mathbf{0}_M$ is the $M \times M$ zero matrix, and $\mathbf{0}_{1\times M}$ is an all-zero vector of length $M$. The $\oplus$ operator indicates modulo-2 addition. $\mathbb{F}_2$ is the Galois field of order $2$. Given a set $S$, we denote its cardinality as $|S|$.  The support of a vector $\mathbf{a}$ of length $n$ is defined as
$\mathrm{Supp}(\mathbf{a})= \{j\in [0 , n-1]\hspace{2mm}|\hspace{2mm}a_j\neq 0\}$. The Hamming weight of a vector $\mathbf{a}$ coincides with its $\ell_1$-norm, and we denote it by  $||\mathbf{a}||=|\mathrm{Supp}(\mathbf{a})|$. We define $\mathbf{e}_i$ as the vector such that $\mathrm{Supp}(\mathbf{e}_i)=\{i\}$. The function $d_{\mathrm{H}}(\cdot,\cdot)$ returns the Hamming distance between two vectors, defined as the number of entries where they differ.    Given two vectors $\mathbf{a}$ and $\mathbf{b}$ with the same length, we say that their overlap is $O(\mathbf{a},\mathbf{b})=|\mathrm{Supp}(\mathbf{a}) \cap \mathrm{Supp}(\mathbf{b})|$.

For the sake of clarity, let us also provide the necessary background on error correcting codes, focusing on binary linear codes.
A binary linear code $\mathcal{C}$ with dimension $k$ and block length $n$, denoted as $(n,k)$ code, is a linear $k$-dimensional subspace of  $\mathbb{F}_2^n$. The code rate  is $R=\frac{k}{n}$. A linear code $\mathcal{C}$ can be represented as the kernel of a parity-check matrix $\mathbf{H}$; that is,
\[
\mathcal{C}=\{\mathbf{c}\,|\,\mathbf{c}\mathbf{H}^\top=\mathbf{0}\},
\]
where $\mathbf{c}$ is a codeword, $^\top$ denotes transposition, and $\mathbf{0}$ is the all-zero vector. Equivalently, the same code can be represented through a generator matrix $\mathbf{G}$, whose rows form a basis for $\mathcal{C}$, as follows
\[
\mathcal{C}=\{\mathbf{c}\,|\,\mathbf{c}=\mathbf{u}\mathbf{G},\,\mathbf{u}\in\mathbb{F}_2^k\},
\]
where $\mathbf{u}$ is an information sequence.
The generator matrix is in \textit{standard form} if it can be written as  
\[
\mathbf{G} = [\mathbf{I}_{k} \mid \mathbf{P}]
\]
or as a column-permuted equivalent, for some $\mathbf{P}$.

We denote the number of codewords of Hamming weight $w$ as $A(w)$, which represents the weight enumerator function, also called distance distribution. For linear codes, the minimum Hamming distance of the code, $d_{\min}$, is the smallest non-null value of $w$  such that $A(w)>0$. In a linear code, all codewords have identical Hamming distance properties, that is, $A(w)$ also represents the number of codewords at Hamming distance $w$ from any fixed codeword. 

\ac{LDPC} codes are characterized by parity-check matrices with a relatively small number of non-zero entries compared to the number of zeros. A code is said \ac{QC} if any of its codewords can be split into $n_0$ blocks of $M$ bits each, and applying a consistent right (or left) cyclic shift to each block results in another valid codeword. We call this type of shift \emph{quasi-cyclic shift}.

\subsection{Communication system}

The communication system we consider is that described in \cite{bluebook}, and shown in Fig. \ref{fig:blockdiag}. According to the standard, the operations summarized in this scheme should be performed when using \ac{LDPC}-coded \ac{CCSDS} compliant transmissions.

\begin{figure}
    \centering
\resizebox{0.95\columnwidth}{!}{\begin{tikzpicture}[node distance=2.25cm and 1.75cm, 
    every node/.style={rectangle, draw, text width=4cm, align=center, minimum height=1cm},  auto]

    % Nodes
    \node (infoword1) [rectangle, draw=none, fill=none] {Infowords};
    \node (encoder) [below=of infoword1] {Encoder};
    \node (randomizer1) [below=of encoder] {Randomizer};
    \node (cltu) [below=of randomizer1] {Encapsulation of randomized encoded data into a CLTU};
    \node (awgn) [below=of cltu, xshift=3cm] {AWGN}; % Position AWGN below both CLTU and detection

    \node (infoword2) [right=of infoword1, rectangle, draw=none, fill=none] {Decoded infowords};
    \node (decoder) [below=of infoword2] {Decoder};
    \node (randomizer2) [below=of decoder] {De-randomizer};
    \node (detection) [below=of randomizer2] {Detection of start sequence};

    % Arrows with text
    \draw[->] (infoword1) -- (encoder);
    \draw[->] (encoder) -- (randomizer1) node[midway, right, rectangle, draw=none, fill=none,xshift=-0.6cm] {Encoded data};
    \draw[->] (randomizer1) -- (cltu) node[midway, right, rectangle, draw=none, fill=none,xshift=0.3cm] {Randomized encoded data};
    \draw[->] (cltu) |- (awgn);
    \node[below right=1.09cm and -3.25cm of cltu, rectangle, draw=none, fill=none] {CLTU}; % Label for CLTU

    \draw[->] (awgn) -| (detection);
    \node[above right=0.2cm and 0.4cm of awgn, rectangle, draw=none, fill=none] {Noisy CLTU}; % Label for Noisy CLTU

    \draw[->] (detection) -- (randomizer2) node[midway, right, rectangle, draw=none, fill=none] {Noisy randomized encoded data and noisy tail sequence};
    \draw[->] (randomizer2) -- (decoder) node[midway, right, rectangle, draw=none, fill=none] {Noisy encoded data and randomized noisy tail sequence};
    \draw[->] (decoder) -- (infoword2);

\end{tikzpicture}}  %  \resizebox{0.6\textwidth}{!}{\input{Figs/blockdiag}}
    \caption{Communication system's blocks diagram}
    \label{fig:blockdiag}
\end{figure}
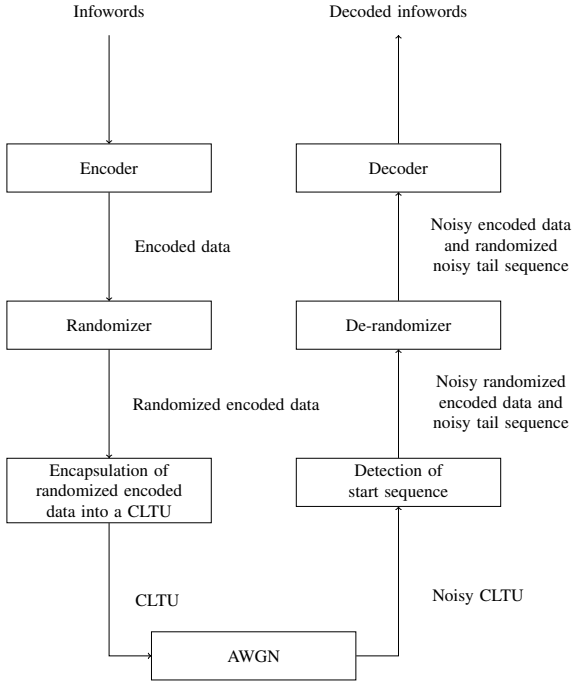 

The $(128,64)$ \ac{LDPC} standard code is specified by an $r\times n$ parity-check matrix $\mathbf{H}_{64 \times 128}$, where $r=n-k=64$ and $n=128$. This matrix is constructed from $M\times M$ submatrices having size $M=k/4=n/8=16$, as shown in Fig. \ref{fig:pcm}. According to Fig. \ref{fig:blockdiag}, encoded data must be randomized before encapsulation. The recommended randomizer  exploits a \ac{LFSR}, represented by the polynomial $x^8+x^6+x^4+x^3+x^2+x+1$. It produces a pseudo-random sequence of period $255$, to be summed mod $2$ to the input.

    \begin{figure*}
    \centering
    $\mathbf{H}_{64 \times 128} = \begin{bmatrix}
    \mathbf{I}_M \oplus \mathbf{\Phi}^7 & \mathbf{\Phi}^2 & \mathbf{\Phi}^{14} & \mathbf{\Phi}^6 & \mathbf{0}_M & \mathbf{\Phi}^0 & \mathbf{\Phi}^{13} & \mathbf{I}_M \\
    \mathbf{\Phi}^6 & \mathbf{I}_M \oplus \mathbf{\Phi}^{15} & \mathbf{\Phi}^0 & \mathbf{\Phi}^1 & \mathbf{I}_M & \mathbf{0}_M & \mathbf{\Phi}^0 & \mathbf{\Phi}^7 \\
    \mathbf{\Phi}^4 & \mathbf{\Phi}^1 & \mathbf{I}_M \oplus \mathbf{\Phi}^{15} & \mathbf{\Phi}^{14} & \mathbf{\Phi}^{11} & \mathbf{I}_M & \mathbf{0}_M & \mathbf{\Phi}^3 \\
    \mathbf{\Phi}^0 & \mathbf{\Phi}^1 & \mathbf{\Phi}^9 & \mathbf{I}_M \oplus \mathbf{\Phi}^{13} & \mathbf{\Phi}^{14} & \mathbf{\Phi}^1 & \mathbf{I}_M & \mathbf{0}_M \\
\end{bmatrix} $
    \caption{Parity-check matrix of the short $(128,64)$ LDPC code}
    \label{fig:pcm}
\end{figure*}

 Then, a typical stream produced by the ground station contains \acp{CLTU}, each followed by a high-transition density sequence (called idle sequence) of variable length, which helps the space segment to keep synchronization until a new \ac{CLTU} is transmitted. A \ac{CLTU} is formed by a start sequence, followed by a variable number of randomized codewords (we denote the maximum number of codewords in a \ac{CLTU} as $N$), and (optionally when the $(128,64)$ \ac{LDPC} code is employed) a \ac{TS}. For short \ac{LDPC}-coded transmissions, the length of the start sequence is $64$ bits, whereas the length of the optional \ac{TS} is the same as that of codewords, i.e., $128$ bits.  

In the subsequent sections of this paper, we demonstrate that excluding the optional \ac{TS} from \acp{CLTU}, which requires relying on the fact that the decoder detects the end of \acp{CLTU} by overrunning it, significantly increases the likelihood of their rejection. Therefore, it is important to use a \ac{TS} and define the best strategy to detect it. The options discussed in \cite{bluebook} are: relying on decoder failures, or using a correlator/\ac{LRT}-based detector. The first approach operates on a straightforward principle: if the decoder fails, both the decoding and \ac{TS} detection processes halt, prompting the upper layers to verify the formal validity of the received \ac{CLTU}. According to the second approach, the detection task is transferred to a separate device, e.g., a correlator/\ac{LRT}-based detector. Should the conditions for \ac{TS} detection be satisfied, the decoding process is terminated, leaving the remainder of the process unchanged.
 Since the \ac{LDPC} decoder processes soft symbols, de-randomization consists of inverting, or not, the signs of the received symbols according to the values of the pseudo-random sequence. In \cite{bluebook}, it is recommended to reset the \ac{LFSR} before randomizing (or de-randomizing) each input $128$-bit sequence. Therefore, the randomizing (or de-randomizing) sequence summed mod $2$ to the codewords, to the \ac{TS}, and to the idle pattern is always the same. Clearly, this sequence matches the first $128$ bits generated by the above \ac{LFSR}, which are, in hexadecimal:

\begin{equation}
\mathbf{s}_{\mathrm{LFSR}}=\mathrm{FF399E5A\,68E906F5\,6C892FA1\,315E08C0}. 
   \label{eq:lfsr}
\end{equation}

\begin{Rem}
    It is shown in \cite{GiulianiAccess2025} that if the \ac{TS} proposed in \cite{bluebook} is randomized before encapsulation, performance significantly improves. For the sake of clarity,  in the rest of the paper we loosely call \ac{DTS} the sequence included in the encapsulation phase summed mod 2 to the first $128$ bits generated by the \ac{LFSR}. This is indeed the \ac{TS} from the decoder viewpoint. In other words, if the \ac{TS} appended to coded data in the encapsulation phase is $\mathbf{t}$, then the \ac{DTS} is
\begin{equation}
\mathbf{v}=\mathbf{t}\oplus\mathbf{s}_{\mathrm{LFSR}}.
\label{eq:randomizets}
\end{equation}
    \label{rem:decTS}
\end{Rem}

\section{Tail sequence design} \label{sec:tsdes}

In this section, we deal with \ac{DTS} design. We first assume that a decoder-based detection is employed. Then,  we show how the \acp{DTS} designed for decoder-based detection can be turned into \acp{DTS} which are also appropriate for correlator/\ac{LRT}-based detection.

Our design approach is based on the Hamming metric. Given $\mathbf{v}\in\mathbb{F}^n_2$ and a code $\mathcal{C}$, we define
\[
D(\mathbf{v},S)= \min_{i \in S} d_{\mathrm{H}}(\mathbf{c}^{(i)}, \mathbf{v}),
\]
where $\mathbf{c}^{(i)}$ denotes the $i$-th codeword\footnote{For the sake of simplicity we consider the codewords $\mathbf{c}^{(i)} \in \mathcal{C}$ to be ordered by increasing Hamming weight and, within the same weight, lexicographically.} and $S\subseteq [0,2^k-1]$ contains the indexes of the codewords from which the Hamming distance is computed. 
Then, the optimal \ac{DTS} $\mathbf{v}^*\in\mathbb{F}_2^n$ should maximize the minimum Hamming distance to any codeword in the code, i.e.,
\begin{equation}
   \mathbf{v}^*=\underset{\mathbf{v}\in\mathbb{F}_2^n}{\mathrm{arg\,max}} \,D(\mathbf{v},[0,2^k-1]). 
   \label{eq:fsp}
\end{equation}
Let \begin{equation}
    D_{\mathcal{C}}=D(\mathbf{v^*},[0,2^k-1]).
    \label{eq:DC}
\end{equation}
For practical instances of this problem, such as the one considered in this paper ($k = 64$), evaluating the distance of a given string from all $2^{64}$ codewords is computationally infeasible. To face this challenge, we propose two alternative design methods. The first restricts comparisons to a smaller subset of codewords, selecting $S$ such that $|S| < 2^k$. The second employs a stochastic local search that dynamically selects subsets of low-weight codewords. 

\subsection{Guaranteed minimum Hamming distance from a fixed subset of codewords}\label{subsec:guara}

In this section we aim at finding an \ac{DTS} $\mathbf{v}_G$ for which
\begin{equation}
    D(\mathbf{v}_G,S)= D(\mathbf{v}_G,[0,2^k-1]),
    \label{eq:DD}
\end{equation}
 being $S= \left[0,\sum_{j = d_{\min}}^{w_{\max}} A(j)\right]\subset [0,2^k-1]$ (where \( d_{\min} \) represents the code minimum distance, and \( w_{\max} \) denotes the maximum Hamming weight of the codewords considered in the analysis), and $D(\mathbf{v}_G,[0,2^k-1])$ is as large as possible.  The first terms of the (exact) weight distribution of the standardized $(128,64)$ \ac{LDPC} code are shown in Table \ref{tab:Aws}. From the table, it is apparent that the code has a minimum distance of  $d_{\min}=14$. All these codewords have been found using the algorithm based on information set decoding proposed in \cite{stern1989method}.

\begin{table}[h]
    \centering
    \caption{$(128,64)$ codeword weight distribution}
    \begin{tabular}{cr}
        \hline
        $w$ & $A(w)$ \\
\hline
        0  & 1 \\
        14 & 16 \\
        16 & 492 \\
        18 & 5\,424 \\
        20 & 81\,760 \\
        22 & 1\,127\,152 \\
        24 & 14\,869\,768 \\\hline
    \end{tabular}
    \label{tab:Aws}
\end{table}

\begin{theorem} 
Let \( \mathcal{C} \) be a binary code of length \( n \) with minimum distance \( d_{\min} \). Let \( \mathbf{v}_G \)  be a sequence such that $||\mathbf{v}_G||=\lfloor(w_{\max}+1)/2\rfloor$ and
\begin{equation}
    O(\mathbf{v}_G, \mathbf{c})\leq \Bigg\lfloor
\frac{||\mathbf{c}||+||\mathbf{v}_G||-(w_{\max}+1)/2}{2}\Bigg\rfloor
\label{eq:operl}
\end{equation} for all codewords \( \mathbf{c} \) with weight \( ||\mathbf{c}|| \in [d_{\min},w_{\max}] \). Then, for all codewords \( \mathbf{c}'\in \mathcal{C} \), it holds that
\begin{equation}
    d_{\mathrm{H}}(\mathbf{v}_G, \mathbf{c}') \geq ||\mathbf{v}_G||.
\label{eq:distfromallcods}
\end{equation}
\label{the:over_gen}
\end{theorem}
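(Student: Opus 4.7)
The plan is to reduce everything to the elementary identity
\[
d_{\mathrm{H}}(\mathbf{v}_G,\mathbf{c}') \;=\; ||\mathbf{v}_G|| + ||\mathbf{c}'|| - 2\,O(\mathbf{v}_G,\mathbf{c}'),
\]
valid for any two binary vectors, and then to split the codewords of $\mathcal{C}$ into three regimes according to their Hamming weight, bounding the overlap $O(\mathbf{v}_G,\mathbf{c}')$ differently in each regime. Writing $T=||\mathbf{v}_G||=\lfloor (w_{\max}+1)/2\rfloor$ throughout, the goal in every regime is to conclude $d_{\mathrm{H}}(\mathbf{v}_G,\mathbf{c}')\ge T$.

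First I would dispose of the trivial case $\mathbf{c}'=\mathbf{0}$, for which the identity gives $d_{\mathrm{H}}(\mathbf{v}_G,\mathbf{0})=||\mathbf{v}_G||=T$ directly; this is the only codeword of weight strictly less than $d_{\min}$ in a linear code. Next, for codewords with $||\mathbf{c}'||\in[d_{\min},w_{\max}]$, I would apply the hypothesis (\ref{eq:operl}) to $\mathbf{c}'$ and substitute into the identity, obtaining
\[
d_{\mathrm{H}}(\mathbf{v}_G,\mathbf{c}') \;\ge\; T + ||\mathbf{c}'|| - 2\Bigl\lfloor\tfrac{||\mathbf{c}'||+T-(w_{\max}+1)/2}{2}\Bigr\rfloor.
\]
Using $\lfloor x\rfloor\le x$ the right-hand side is at least $(w_{\max}+1)/2$, hence at least $T$. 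The only subtle point is the parity bookkeeping when $w_{\max}$ is even, because $(w_{\max}+1)/2$ is then a half-integer; a quick case split on the parity of $w_{\max}$ (and, if needed, of $||\mathbf{c}'||$) turns the real-valued inequality into the integer inequality $d_{\mathrm{H}}\ge T$ that we want.

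Finally, for codewords with $||\mathbf{c}'||>w_{\max}$ the hypothesis does not apply, and I would instead use the trivial bound $O(\mathbf{v}_G,\mathbf{c}')\le \min(||\mathbf{v}_G||,||\mathbf{c}'||)\le T$. The identity then gives $d_{\mathrm{H}}(\mathbf{v}_G,\mathbf{c}')\ge ||\mathbf{c}'||-T$, and since $||\mathbf{c}'||\ge w_{\max}+1\ge 2T$ (using $T=\lfloor (w_{\max}+1)/2\rfloor$), the conclusion $d_{\mathrm{H}}(\mathbf{v}_G,\mathbf{c}')\ge T$ follows.

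I do not anticipate a serious conceptual obstacle: once the weight identity is written down, the argument is essentially a case analysis driven by the definition of $T$. The most delicate step, and the one that requires the most care in writing, is the middle regime, where the floor and the potentially half-integer quantity $(w_{\max}+1)/2$ must be manipulated cleanly to avoid an off-by-one error; I would make this rigorous by treating $w_{\max}$ even and odd separately, so that the floor disappears and the inequality reduces to a simple parity check on $||\mathbf{c}'||$.
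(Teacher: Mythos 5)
Your proposal is correct and follows essentially the same route as the paper's proof: both rest on the identity $d_{\mathrm{H}}(\mathbf{v}_G,\mathbf{c}')=||\mathbf{v}_G||+||\mathbf{c}'||-2\,O(\mathbf{v}_G,\mathbf{c}')$, use the hypothesis to bound the overlap for codewords of weight in $[d_{\min},w_{\max}]$, and use the trivial bound $O\le ||\mathbf{v}_G||$ together with $||\mathbf{c}'||\ge w_{\max}+1\ge 2||\mathbf{v}_G||$ for the heavier codewords. The parity case split you worry about is not even needed, since $d_{\mathrm{H}}\ge (w_{\max}+1)/2\ge \lfloor (w_{\max}+1)/2\rfloor=||\mathbf{v}_G||$ already closes the middle regime.
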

\begin{IEEEproof}
For the codewords having weight at least twice that of $\mathbf{v}_G$, it follows from \eqref{eq:operl} that their maximum number of overlapping $1$s with $\mathbf{v}_G$ can reach the maximum possible value, that is, $||\mathbf{v}_G||$. Hence, the Hamming distance between $\mathbf{v}_G$ and any of these codewords is lower bounded by $||\mathbf{v}_G||$. For all the other codewords, the thesis follows from \eqref{eq:operl}. In fact, according to the hypotheses, there are at most $\big\lfloor
\frac{||\mathbf{c}||}{2}\big\rfloor$ overlapping $1$s between any of these codewords and $\mathbf{v}_G$. Therefore, their Hamming distance is at least $d_{\mathrm{H}}(\mathbf{v}_G,\mathbf{c})= ||\mathbf{v}_G||+||\mathbf{c}|| - 2 \Big\lfloor
\frac{||\mathbf{c}||}{2}\Big\rfloor \geq ||\mathbf{v}_G||$.
\end{IEEEproof}

\begin{cor}
    Let \( \mathcal{C} \) be a binary code of length \( n \) with minimum distance \( d_{\min} \) and even $A(w)$ for all non-zero values of $w$. Let \( \mathbf{v}_G \)  be a sequence such that $||\mathbf{v}_G||=\lceil(w_{\max}+1)/2\rceil$ and
\begin{equation}
    O(\mathbf{v}_G, \mathbf{c})\leq \Bigg\lfloor
\frac{||\mathbf{c}||+||\mathbf{v}_G||-(w_{\max}+1)/2}{2}\Bigg\rfloor
\label{eq:operleven}
\end{equation} for all codewords \( \mathbf{c} \) with weight \( ||\mathbf{c}|| \in [d_{\min},w_{\max}] \). Then, for all codewords \( \mathbf{c}'\in \mathcal{C} \), it holds that
\begin{equation}
    d_{\mathrm{H}}(\mathbf{v}_G, \mathbf{c}') \geq ||\mathbf{v}_G||.
\label{eq:distfromallcods_even}
\end{equation}
\end{cor}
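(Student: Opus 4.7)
The plan is to follow the proof of Theorem~\ref{the:over_gen} and isolate, via a parity split, the single codeword weight that the strengthened choice $||\mathbf{v}_G|| = \lceil(w_{\max}+1)/2\rceil$ leaves exposed. First I would note that when $w_{\max}$ is odd, $\lceil(w_{\max}+1)/2\rceil = \lfloor(w_{\max}+1)/2\rfloor$, so the statement coincides verbatim with Theorem~\ref{the:over_gen} and there is nothing new to prove. Accordingly the substantive case is $w_{\max}$ even, for which $||\mathbf{v}_G|| = w_{\max}/2 + 1$ and the target lower bound is $d_{\mathrm{H}} \geq w_{\max}/2 + 1$.

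I would partition $\mathcal{C}\setminus\{\mathbf{0}\}$ by codeword weight into three ranges and treat each using the identity $d_{\mathrm{H}}(\mathbf{v}_G,\mathbf{c}) = ||\mathbf{v}_G|| + ||\mathbf{c}|| - 2\,O(\mathbf{v}_G,\mathbf{c})$. In the range $d_{\min} \leq ||\mathbf{c}|| \leq w_{\max}$, inserting $||\mathbf{v}_G|| = w_{\max}/2+1$ into \eqref{eq:operleven} collapses its right-hand side to $\lfloor (||\mathbf{c}||+1/2)/2 \rfloor$, which equals $||\mathbf{c}||/2$ for even $||\mathbf{c}||$ and $(||\mathbf{c}||-1)/2$ for odd $||\mathbf{c}||$; these yield $d_{\mathrm{H}} \geq ||\mathbf{v}_G||$ and $d_{\mathrm{H}} \geq ||\mathbf{v}_G|| + 1$, respectively. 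In the range $||\mathbf{c}|| \geq 2||\mathbf{v}_G|| = w_{\max}+2$, the trivial bound $O \leq ||\mathbf{v}_G||$ gives $d_{\mathrm{H}} \geq ||\mathbf{c}|| - ||\mathbf{v}_G|| \geq ||\mathbf{v}_G||$, exactly as in the proof of Theorem~\ref{the:over_gen}.

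The principal obstacle is the newly opened intermediate range $w_{\max} < ||\mathbf{c}|| < 2||\mathbf{v}_G||$, which in the even case consists solely of the odd weight $w_{\max}+1$. The overlap hypothesis \eqref{eq:operleven} does not control this weight, and the purely trivial overlap estimate only gives $d_{\mathrm{H}} \geq w_{\max}/2 = ||\mathbf{v}_G|| - 1$, one short of the claim. It is precisely here that the evenness hypothesis on $A(w)$ must enter: it rules out codewords of the odd weight $w_{\max}+1$, so the intermediate range is vacuous and \eqref{eq:distfromallcods_even} holds uniformly on $\mathcal{C}$. The conceptual point is that the one-unit strengthening over Theorem~\ref{the:over_gen} is paid for exactly by the evenness hypothesis, which closes off the odd-weight gap that would otherwise open between $w_{\max}$ and $2||\mathbf{v}_G||$; the remainder of the argument is bookkeeping of parities already present in the proof of Theorem~\ref{the:over_gen}.
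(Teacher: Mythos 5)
Your proof is correct and takes essentially the same route as the paper's: both arguments reduce to the observation that the only weight left uncovered by the hypotheses, namely $w_{\max}+1$, cannot occur as a codeword weight under the evenness assumption, and this is exactly what pays for the one-unit increase of $||\mathbf{v}_G||$ over Theorem~\ref{the:over_gen}. Your version merely makes explicit the parity bookkeeping and the trichotomy of weight ranges that the paper's two-line proof leaves implicit.
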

\begin{IEEEproof}
    The proof is almost identical to that of Theorem \ref{the:over_gen}. However, here we need to consider that codewords with weight $w_{\max}+1$ do not exist. Therefore, it is possible to set $||\mathbf{v}_G||=\lceil(w_{\max}+1)/2\rceil$, and it is still guaranteed that the maximum number of overlappings with codewords of weight larger than $w_{\max}$ is $|| \mathbf{v}_G||$.
\end{IEEEproof}

An analogous result can be found for codes characterized by odd $A(w)$ for all non-zero values of $w$.

The choice of $w_{\max}$ is, clearly, a trade-off between the algorithm complexity and the \ac{TS} performance. 

The search algorithm we employed to find \acp{DTS} at guaranteed minimum Hamming distance from codewords  is described in Algorithm \ref{alg:tail_sequence_searchfixed}.

\begin{algorithm}
\caption{Receiver Tail Sequence Search with Guaranteed Minimum Hamming Distance from Codewords}
\label{alg:tail_sequence_searchfixed}
\begin{algorithmic}[1]
\renewcommand{\algorithmicrequire}{\textbf{Input:}}
\renewcommand{\algorithmicensure}{\textbf{Output:}}
\Require code length $n$, code minimum distance $d_{\min}$, $w_{\max}>d_{\min}$, all codewords of Hamming weight in $[d_{\min},w_{\max}]$,
\Ensure A valid Receiver Tail Sequence $\mathbf{v}_G$ with guaranteed minimum Hamming distance $\lceil(w_{\max}+1)/2\rceil$ from all codewords.
\State \textbf{Initialization:} 
\State $D \gets \lceil(w_{\max}+1)/2\rceil$
\State success $\gets$ false
\Repeat
    \State Generate a random binary pattern $\mathbf{v}$ of length $n$ with weight $D$
    \State valid $\gets$ true
    \For{$\omega \in [d_{\min},w_{\max}]$}
        \State $O_{\omega} \gets  \frac{\omega}{2}$
        \For{$\mathbf{c}$ such that $\mathbf{c}\in\mathcal{C}$ and $\mathrm{wt}_{\mathrm{H}}(\mathbf{c})=\omega$ }
            \If{$O(\mathbf{v},\mathbf{c})>O_{\omega}$}
                \State valid $\gets$ false
                \State \textbf{break}
            \EndIf
        \EndFor
        \If{not valid}
            \State \textbf{break}
        \EndIf
    \EndFor
    \If{valid}
        \State success $\gets$ true
        \State \textbf{break}
    \EndIf
\Until{maximum attempts reached}

\If{not success}
    \State \Return \texttt{failure}
\Else
    \State \Return $\mathbf{v}_G$
\EndIf
\end{algorithmic}
\end{algorithm}

For the standard $(128,64)$ \ac{LDPC} code, if we run Algorithm \ref{alg:tail_sequence_searchfixed} considering $w_{\max}=22$, which yields $||\mathbf{v}_G|| = 12$, we are able to obtain several candidate \acp{DTS}, some of which are shown in Table \ref{tab:tsguara}. However, we anticipate that this design method does not have optimal performance for the specific application we are considering, since the final weight of these \acp{DTS} is too low. For this reason, in the next section we propose a different approach to the \ac{DTS} design.
 
\begin{table}[h]
\centering
\caption{Hexadecimal representations of candidate Receiver \acp{TS} with guaranteed minimum Hamming distance $12$ from codewords}
\begin{tabular}{|c|c|}
\hline
\textbf{Sequence Nr.} & \textbf{Hexadecimal Representation of $\mathbf{v}_G$} \\
\hline
1 & 40002103\,20000001\,00200000\,04040009 \\
2 &  10000080\,03006000\,00600008\,000000C1\\
3 & 00100002\,80402080\,01001002\,00401200 \\
4 & 02080200\,40000000\,08108001\,A0005000 \\
5 & 80280000\,04000000\,03044200\,00010804 \\
\hline
\end{tabular}
\label{tab:tsguara}
\end{table}

\subsection{Stochastic local search}\label{subsec:greedy}
 This section explores the potential for increasing the minimum Hamming distance between the \ac{DTS} and a dynamically selected subset of low-weight codewords, where the subset is not predetermined as in Section \ref{sec:tsdes}-\ref{subsec:guara}, but evolves as the algorithm progresses. 

Let us consider $\mathbb{F}_2^n$ equipped with the Hamming distance function. Given a linear code $\mathcal{C}$, we define the \emph{Voronoi cell} around codeword $\mathbf{c}^{(i)}$, $i\in[0,2^k-1]$, denoted as $\mathcal{V}(\mathbf{c}^{(i)})$, the set of points in $\mathbb{F}_2^n$ whose Hamming distance from $\mathbf{c}^{(i)}$ is smaller than or equal to the Hamming distance from $\mathbf{c}^{(j)}$, where $i\neq j$. Therefore,
\[
\mathcal{V}(\mathbf{c}^{(i)})=\{\mathbf{v}\in\mathbb{F}_2^n \,|\, d_{\mathrm{H}}(\mathbf{v},\mathbf{c}^{(i)})\leq d_{\mathrm{H}}(\mathbf{v},\mathbf{c}^{(j)}), \forall i\neq j\}.
\]
Since the code is linear, all the Voronoi cells have the same shape.

In this section, we aim to identify vectors that lie on the boundaries\footnote{In $\mathbb{F}_2^n$, the notion of a boundary is not as rigorously defined as in Euclidean spaces. Here, we loosely refer to the boundary of a Voronoi cell as the set of vectors at maximal Hamming distance from the codeword that defines the cell.} between Voronoi cells, as potential candidates for \acp{DTS}. The criteria used to assess the quality of these candidates are presented below and are, naturally, based on their distance from the nearest codewords.  We propose to use a stochastic local search which starts from a codeword and looks for vectors in the boundary between Voronoi cells.

The stochastic local search we propose is described in Algorithm \ref{alg:greedy}. The idea is to start from a Hamming weight-1 vector $\mathbf{s}$, and then progressively flip its entries, one by one, as follows. An entry of $\mathbf{s}$ should be flipped in such a way that the distance from the nearest codewords increases by 1. The search terminates when no further bit flips are possible. The task of searching for the set of codewords nearest to $\mathbf{s}$ and the corresponding Hamming distance is performed by the function NCS (which stands for Nearest Codeword Search Around a Nonzero Vector) in Algorithm \ref{alg:greedy}. A specific implementation of this function for the $(128,64)$ \ac{LDPC} code (or for any rate-$1/2$ code)  is described later (see Algorithm \ref{alg:nearestcw}). Algorithm \ref{alg:greedy} is particularly effective for the $(128,64)$ \ac{LDPC} code or, more in general, for all codes whose weight distribution is non-null for even only (or odd only) values of $w$. In fact, the following results hold.

\iffalse
\begin{algorithm}[tb]
    \caption{Stochastic Local Search to Find a Receiver Tail Sequence}
    \begin{algorithmic}[1]
    \renewcommand{\algorithmicrequire}{\textbf{Input:}}
    \renewcommand{\algorithmicensure}{\textbf{Output:}}
        \Require 
        \Ensure Candidate\_TS

\State $\mathbf{s}\leftarrow\,$all\_zero\_codeword
        \State Randomly generate $i\in[0,n-1]$ and set $s_i \leftarrow  1$
        
            \While{not converged}
                \State Find the Hamming distance $d$ between $\mathbf{s}$ and the nearest codeword, and all codewords at that distance, using $\mathbf{G}$. This pool of codewords is denoted as $\hat{\mathcal{C}}$.
\State Initialize $\text{valid\_indices} \leftarrow \emptyset$
\For{$i \in [0, n-1]$}
    \If{$\hat{c}^{(j_1)}_i = \hat{c}^{(j_2)}_i, \forall j_1, j_2 \in [0, |\hat{\mathcal{C}}|-1]$}
        \State $\text{valid\_indices} \leftarrow \text{valid\_indices} \cup \{i\}$
    \EndIf
\EndFor
\If{$\text{valid\_indices} \neq \emptyset$}
    \State Select $i$ uniformly at random from $\text{valid\_indices}$
    \State Set $\mathbf{s} \leftarrow \mathbf{s} \oplus \mathbf{e}_i$
\Else
    \State Candidate\_TS$\,\leftarrow\mathbf{s}$
    \State Terminate search
\EndIf            \EndWhile
   
    \end{algorithmic}
    \label{alg:greedy}
\end{algorithm}
\fi

\begin{algorithm}[tb]
\caption{Stochastic Local Search to Find a Receiver Tail Sequence}
\begin{algorithmic}[1]
\renewcommand{\algorithmicrequire}{\textbf{Input:}}
\renewcommand{\algorithmicensure}{\textbf{Output:}}
        \Require Code generator matrices $\mathbf{G}_{\text{L}}= [\mathbf{I}_{k} \mid \mathbf{P}_{\text{L}}]$ and $\mathbf{G}_{\text{R}}= [\mathbf{P}^{(\text{R})} \mid \mathbf{I}_{k}]$ 
        \Ensure Candidate\_TS

\State $\mathbf{s}\leftarrow\,$all\_zero\_codeword; $t \leftarrow$ 0
        \State Select $i\in[0,n-1]$ uniformly at random and set $s_i \leftarrow  1$
        
            \While{$t < T$} \Comment{$T$ = max \# of iterations}
                \State $(\hat{\mathcal{C}}, d) =$ NCS($\mathbf{G}_{\text{L}}$, $\mathbf{G}_{\text{R}}$, $ \mathbf{s}$) \Comment{see Algorithm \ref{alg:nearestcw}}
\State $\mathcal{I} \leftarrow \emptyset$ \Comment{index set Initialization} 

\State $\textbf{c} \leftarrow \bigwedge_{i \in \hat{\mathcal{C}}} \hat{\textbf{c}}^{(i)}$
\State $\mathcal{I} \leftarrow \mathcal{I} \cup \{i \in [0, n-1] : c_i \wedge s_i = 1 \}$
\State $\textbf{c} \leftarrow \bigwedge_{i \in \hat{\mathcal{C}}} \lnot \, \hat{\textbf{c}}^{(i)}$
\State $\mathcal{I} \leftarrow \mathcal{I} \cup \{i \in [0, n-1] : c_i \wedge \lnot s_i = 1\}$

\If{$\mathcal{I} \neq \emptyset$}
    \State Select $i$ uniformly at random from $\mathcal{I}$
    \State Set $\mathbf{s} \leftarrow \mathbf{s} \oplus
    \mathbf{e}_i$
    \State $t \leftarrow t + 1$
\Else
    \State Candidate\_TS$\,\leftarrow\mathbf{s}$ \Comment{TS at distance $d$}
    \State $t \leftarrow T$ 
\EndIf            
\EndWhile
   
\end{algorithmic}
\label{alg:greedy}
\end{algorithm}

\begin{prop}
    Let us consider $\mathbf{s}\in \mathbb{F}_2^n$, and a binary linear code $\mathcal{C}$ such that $A(w)\neq0$ only for even values of $w$ (and obviously for $w=0$)\footnote{The proof is analogous if the weight distribution is non-null only for odd values of $w$.}. Let us denote as $\hat{\mathcal{C}}$ the set of codewords at minimum Hamming distance $d$ from $\mathbf{s}$. If there exists $\mathbf{e}_j$, for some $j\in[0,n-1]$, such that $||\mathbf{e}_j||=1$ and $\min_{\hat{\mathbf{c}} \in \hat{\mathcal{C}}} \{d_{\mathrm{H}}(\mathbf{s}\oplus\mathbf{e}_j,\hat{\mathbf{c}})\} = d+1$, then $\min_{\mathbf{c} \in \mathcal{C}} \{d_{\mathrm{H}}(\mathbf{s}\oplus\mathbf{e}_j,\mathbf{c})\} \geq d+1$. 
    \label{lem:zets}
\end{prop}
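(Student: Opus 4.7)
The plan is to exploit the hypothesis that all nonzero codewords of $\mathcal{C}$ have even weight in order to control the parities of the distances from $\mathbf{s}$ to the codewords, and then to combine this with the elementary fact that a single bit flip changes any Hamming distance by exactly $\pm 1$. The outcome we want is that, outside the ``already-nearest'' set $\hat{\mathcal{C}}$, no codeword can suddenly become dangerously close after the flip.

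First I would establish the parity lemma: for any $\mathbf{c}_1, \mathbf{c}_2 \in \mathcal{C}$, linearity gives $\mathbf{c}_1 \oplus \mathbf{c}_2 \in \mathcal{C}$, so by the even-weight hypothesis $d_{\mathrm{H}}(\mathbf{c}_1, \mathbf{c}_2) = ||\mathbf{c}_1 \oplus \mathbf{c}_2||$ is even. A one-line bitwise computation modulo 2 then yields
\[
d_{\mathrm{H}}(\mathbf{s}, \mathbf{c}_1) + d_{\mathrm{H}}(\mathbf{s}, \mathbf{c}_2) \equiv d_{\mathrm{H}}(\mathbf{c}_1, \mathbf{c}_2) \equiv 0 \pmod{2}.
\]
Hence all distances $\{d_{\mathrm{H}}(\mathbf{s}, \mathbf{c}) : \mathbf{c} \in \mathcal{C}\}$ share a common parity, namely that of the minimum distance $d$. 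Consequently, every codeword $\mathbf{c} \notin \hat{\mathcal{C}}$ must satisfy $d_{\mathrm{H}}(\mathbf{s}, \mathbf{c}) \geq d + 2$, since $d+1$ would have the wrong parity.

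Next I would split the conclusion into two cases on a generic $\mathbf{c} \in \mathcal{C}$. If $\mathbf{c} \in \hat{\mathcal{C}}$, the statement's hypothesis directly gives $d_{\mathrm{H}}(\mathbf{s} \oplus \mathbf{e}_j, \mathbf{c}) \geq d + 1$. If instead $\mathbf{c} \notin \hat{\mathcal{C}}$, the parity step yields $d_{\mathrm{H}}(\mathbf{s}, \mathbf{c}) \geq d + 2$, and since a single-bit flip cannot decrease any Hamming distance by more than $1$, we obtain $d_{\mathrm{H}}(\mathbf{s} \oplus \mathbf{e}_j, \mathbf{c}) \geq d + 1$. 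Taking the minimum over $\mathbf{c} \in \mathcal{C}$ delivers the claim.

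The only genuinely nontrivial step is the parity observation; everything else is bookkeeping. I would expect the main obstacle to be realizing that without the even-weight assumption the claim fails, because then a codeword $\mathbf{c} \notin \hat{\mathcal{C}}$ could lie at distance exactly $d+1$ from $\mathbf{s}$, and a bit flip oriented toward $\mathbf{c}$ would drop it to $d$, violating the conclusion. Thus the parity-forcing role of the even-weight hypothesis should be emphasized, and the same argument transparently extends to the all-odd-weight case simply by shifting parities (as noted in the footnote of the proposition's statement).
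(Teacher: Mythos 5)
Your proof is correct, and it actually takes a different (and more complete) route than the paper's. The paper argues by contradiction: it assumes a codeword $\tilde{\mathbf{c}}$ with $d_{\mathrm{H}}(\mathbf{s}\oplus\mathbf{e}_j,\tilde{\mathbf{c}})<d+1$ and claims this ``contradicts the definition of $\hat{\mathcal{C}}$.'' But $\hat{\mathcal{C}}$ is defined by distances from $\mathbf{s}$, not from $\mathbf{s}\oplus\mathbf{e}_j$, so the only case that is immediately contradictory is $d_{\mathrm{H}}(\mathbf{s},\tilde{\mathbf{c}})\le d$; the dangerous case $d_{\mathrm{H}}(\mathbf{s},\tilde{\mathbf{c}})=d+1$ (which would allow the flip to bring $\tilde{\mathbf{c}}$ down to distance $d$) is never explicitly excluded, and indeed the paper's proof body never invokes the even-weight hypothesis at all, even though the proposition cannot hold without it. Your parity lemma --- that $d_{\mathrm{H}}(\mathbf{s},\mathbf{c}_1)+d_{\mathrm{H}}(\mathbf{s},\mathbf{c}_2)\equiv d_{\mathrm{H}}(\mathbf{c}_1,\mathbf{c}_2)\equiv 0 \pmod 2$ by linearity and the all-even weight distribution, hence every codeword outside $\hat{\mathcal{C}}$ sits at distance at least $d+2$ from $\mathbf{s}$ --- is exactly the ingredient that closes this gap, and your two-case split then finishes cleanly since a single bit flip changes any Hamming distance by exactly one. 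In short, your argument is direct where the paper's is by contradiction, it makes the role of the even-weight hypothesis explicit where the paper leaves it implicit, and it is the version one would want to see in a careful write-up; your closing remark about why the claim fails without the parity assumption is also accurate.
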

\begin{IEEEproof} 
   By definition, $\mathbf{e}_j$, with $j\in[0,n-1]$, is a vector with unitary Hamming weight, thus it flips one bit of $\mathbf{s}$ when computing $\mathbf{s}\oplus\mathbf{e}_j$. Consequently, $\mathbf{s} \oplus \mathbf{e}_j$ differs from $\mathbf{s}$ at exactly one position. 
   The set $\{\hat{\mathbf{c}}^{(i)}\}$ consists of the codewords at minimum Hamming distance $d$ from $\mathbf{s}$. 
   Furthermore, given the condition that 
   \[
   \min_{0 \leq i \leq z-1} \{d_{\mathrm{H}}(\mathbf{s} + \mathbf{e}_j, \hat{\mathbf{c}}^{(i)})\} = d+1,
   \]
   we know that adding the vector $\mathbf{e}_j$ increases the Hamming distance between $\mathbf{s} + \mathbf{e}_j$ and any codeword in $\{\hat{\mathbf{c}}^{(i)}\}$ by exactly $1$.  
   The set $\{\mathbf{c}^{(i)}\}$ includes the subset $\{\hat{\mathbf{c}}^{(i)}\}$.   Assume, for contradiction, that there exists a codeword $\tilde{\mathbf{c}}^{(i)}$ such that 
   \[
   d_{\mathrm{H}}(\mathbf{s} + \mathbf{e}_j, \tilde{\mathbf{c}}^{(i)}) < d+1.
   \]
   Since $\{\hat{\mathbf{c}}^{(i)}\}$ are the codewords at minimum Hamming distance from $\mathbf{s}$ and adding $\mathbf{e}_j$ increases the distance by exactly $1$ for all $\hat{\mathbf{c}}^{(i)}$, the assumption implies the existence of a codeword  closer to $\mathbf{s} + \mathbf{e}$ than any $\hat{\mathbf{c}}^{(i)}$, contradicting the definition of $\{\hat{\mathbf{c}}^{(i)}\}$. 
   So, the minimum Hamming distance between $\mathbf{s} + \mathbf{e}_j$ and any codeword $\mathbf{c}^{(i)} \in \mathcal{C}$ satisfies 
   \begin{equation}
         \min_{0 \leq i \leq 2^k-1} \{d_{\mathrm{H}}(\mathbf{s} + \mathbf{e}_j, \mathbf{c}^{(i)})\} \geq d+1. 
         \label{eq:dguaraVoro}
   \end{equation}
\end{IEEEproof}

In simple terms, Proposition \ref{lem:zets} states that each $while$ iteration taken by Algorithm \ref{alg:greedy} never decreases the minimum Hamming distance of the tentative solution from the considered set of nearest codewords. In order to find a valid support for $\mathbf{e}_j$, it is possible to exploit the following simple property.

\begin{prop} 
    Let us consider $\mathbf{s}\in \mathbb{F}_2^n$, and a binary linear code. Let us denote the codewords at minimum Hamming distance $d$ from  $\mathbf{s}$ as $\{\hat{\mathbf{c}}^{(0)},\ldots,\hat{\mathbf{c}}^{(z-1)}\}$. If  $\exists i\in[0,n-1] \,|\, \hat{c}^{(j_1)}_i=\hat{c}^{(j_2)}_i=s_i,\forall j_1,j_2 \in[0,z-1]$ then  $\min_{0\leq i\leq z-1} \{d_{\mathrm{H}}(\mathbf{s}',\hat{\mathbf{c}}^{(i)})\} = d+1$, where
    \[
\mathbf{s}'=\mathbf{s}+\mathbf{e}_j
\]
\end{prop}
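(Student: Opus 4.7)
The plan is to reduce the claim to a direct bookkeeping argument on coordinates, since a single-bit flip can change the Hamming distance to any fixed vector only by $\pm 1$. First I would rewrite the Hamming distance as the sum of disagreement indicators,
\[
d_{\mathrm{H}}(\mathbf{a},\mathbf{b}) = \sum_{\ell=0}^{n-1} \mathbf{1}[a_\ell \neq b_\ell],
\]
so that the effect of flipping a single coordinate becomes transparent: every term with $\ell \neq i$ stays the same, while the $\ell = i$ term toggles between $0$ and $1$. I would also flag, at the outset, the mild notational ambiguity in the statement — the index appearing in the hypothesis is $i$, whereas the conclusion writes $\mathbf{e}_j$; the natural (and only meaningful) reading is that this is the same coordinate, so I would explicitly set $\mathbf{s}' = \mathbf{s} \oplus \mathbf{e}_i$.

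Next I would fix an arbitrary $l \in [0,z-1]$ and compare $d_{\mathrm{H}}(\mathbf{s}',\hat{\mathbf{c}}^{(l)})$ with $d_{\mathrm{H}}(\mathbf{s},\hat{\mathbf{c}}^{(l)}) = d$. Since $\mathbf{s}'$ and $\mathbf{s}$ differ only at position $i$, all indicators $\mathbf{1}[s'_\ell \neq \hat{c}^{(l)}_\ell]$ with $\ell \neq i$ equal their counterparts for $\mathbf{s}$. At position $i$, the hypothesis gives $\hat{c}^{(l)}_i = s_i$, hence the corresponding indicator was $0$; after the flip, $s'_i = 1 \oplus s_i \neq \hat{c}^{(l)}_i$, so the indicator becomes $1$. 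Therefore the distance gains exactly one unit,
\[
d_{\mathrm{H}}(\mathbf{s}', \hat{\mathbf{c}}^{(l)}) = d_{\mathrm{H}}(\mathbf{s}, \hat{\mathbf{c}}^{(l)}) + 1 = d + 1.
\]
Because $l$ was arbitrary in $[0,z-1]$, the minimum over this finite set coincides with the common value $d+1$, which is the desired conclusion.

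There is essentially no technical obstacle: the statement is a purely coordinate-wise observation that the hypothesis has been engineered to extract, and it does not require any property of the code (not even linearity). The only delicate point is the cosmetic one already mentioned, namely making sure the reader identifies the index in $\mathbf{e}_j$ with the index $i$ produced by the existential quantifier in the hypothesis. Once that identification is made, the proof is a single application of the definition of Hamming distance, and it pairs cleanly with Proposition~\ref{lem:zets}: the previous proposition guarantees that a bit flip satisfying the agreement condition raises the distance to \emph{every} codeword, while the present one identifies an easily checkable structural criterion — shared agreement of the nearest codewords with $\mathbf{s}$ at some coordinate — under which such a flip exists, justifying the inner loop of Algorithm~\ref{alg:greedy}.
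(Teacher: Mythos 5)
Your proof is correct and follows essentially the same coordinate-wise argument as the paper: since each nearest codeword agrees with $\mathbf{s}$ at position $i$, flipping that bit increases the distance to every $\hat{\mathbf{c}}^{(l)}$ by exactly one, so the minimum becomes $d+1$. Your observation about the $i$ versus $j$ index mismatch in the statement is a fair catch of a notational slip, but it does not change the substance.
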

\begin{IEEEproof}
   We start by observing that $\mathbf{s}$ and $\mathbf{s}'$ differ only in the $i$-th coordinate. Given that $\hat{c}^{(j_1)}_i = \hat{c}^{(j_2)}_i=s_i$ for all $j_1, j_2$ in the indices of the nearest codewords, it follows that the $i$-th component of each $\hat{\mathbf{c}}^{(j)}$ is the same (specifically, it is $s_i$).  Then, flipping the $i$-th bit of $\mathbf{s}$ to create $\mathbf{s}'$ means that $s'_i = s_i \oplus 1 \neq \hat{c}^{(j)}_i$ for all $j$.  Thus, the Hamming distance between $\mathbf{s}'$ and each $\hat{\mathbf{c}}^{(j)}$ increases by exactly $1$ compared to the distance between $\mathbf{s}$ and $\hat{\mathbf{c}}^{(j)}$. Therefore, $d_{\mathrm{H}}(\mathbf{s}', \hat{\mathbf{c}}^{(j)}) = d_{\mathrm{H}}(\mathbf{s}, \hat{\mathbf{c}}^{(j)}) + 1 = d + 1$ for all $j$, and consequently, $\min_{0 \leq j \leq z-1} \{d_{\mathrm{H}}(\mathbf{s}', \hat{\mathbf{c}}^{(j)})\} = d + 1$, as required.
\end{IEEEproof}

From \eqref{eq:dguaraVoro}, we find that also Algorithm \ref{alg:greedy}  returns sequences with ``guaranteed'' minimum Hamming distance from all  codewords.

In Fig. \ref{fig:gen}, we show a bidimensional representation of some Voronoi cells, and trace a possible path from a codeword (starting point) to a candidate termination sequence (red square). Each red bullet represents a tuple in $\mathbb{F}_2^n$. The Hamming distance between each pair of tuples represented by connected red bullets is $1$. 

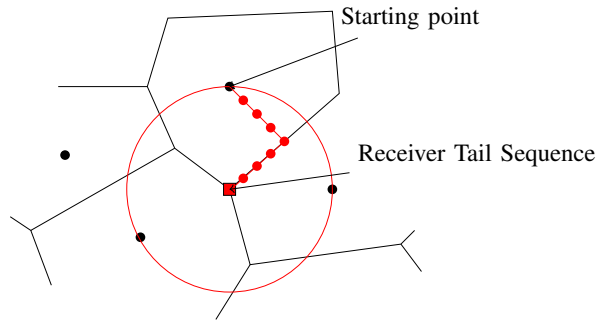
\begin{figure}
\centering
\resizebox{0.95\columnwidth}{!}
{\begin{tikzpicture}[
    dot/.style={circle, fill=black, inner sep=0pt, minimum size=4pt}
]
    % Draw the irregular polygon structure
   \draw (0,0) -- (-0.8,0.6) -- (-1.2,1.5) -- (-0.9,2.5) -- (1.5,2.6) -- 
        (1.6,1.4) -- cycle;

    \draw (0,0) -- (0.3,-1.1)--(2.5,-0.8)--(2.7,-0.6);

\draw (2.5,-0.8) -- (2.8,-1.2);

\draw (0.3,-1.1)--(-0.2,-1.9);

\draw (-0.8,0.6) -- (-2.9,-0.6) -- (-2.6,-1.4) ;
\draw (-2.9,-0.6) -- (-3.2,-0.4) ;

\draw (-1.2,1.5) -- (-2.5,1.5);

    % Add dots representing codewords
    \node[dot] at (-2.4,0.5) {};
    \node[dot] at (0,1.5) {}; % Starting point
    \node[dot] at (1.5,0) {};
    \node[dot] at (-1.3,-0.7) {};
  \node[draw, fill=red, shape=rectangle, inner sep=0pt, minimum size=5pt] at (0,0) {};

    % Draw the red circle
    \draw[red] (0,0) circle (1.5cm);
    
    % Draw the path with dots
    \draw[red] (0.2,1.3) -- ++(0,0)
          node[dot, red] {} -- ++(0.2,-0.2)
         node[dot, red] {} -- ++(0.2,-0.2)
         node[dot, red] {} -- ++(0.2,-0.2)
         node[dot, red] {} -- ++(-0.2,-0.18)
         node[dot, red] {} -- ++(-0.2,-0.18)
         node[dot, red] {} -- ++(-0.2,-0.18)
         node[dot, red] {} -- ++(-0.2,-0.18)
     ;
    \draw[red](0,1.5)--(0.2,1.3);
    
    % Add labels
    %\node[right] at (3,2) {$GF(2^n)$};
    \node[right] at (1.5,2.5) {Starting point} edge[->,black] (0,1.5);
     \node[right] at (1.75,0.5) {Receiver Tail Sequence} edge[->,black] (0,0);
    %\node[right, red, text width=4cm] at (2.5,1) {Each step increases the\\distance to the nearest\\codeword(s)};
    
\end{tikzpicture}}  
\caption{Simplified representation of the Receiver Tail Sequence search according to Algorithm \ref{alg:greedy}, starting from a codeword, in which each step increases the distance to all nearest codewords, until this is not possible with single steps}
\label{fig:gen}
\end{figure} 

Let us now provide a procedure to find codewords that are closest to a given vector $\mathbf{s}$, for a code having rate $R=1/2$, like the one here of interest. The procedure is divided into \emph{left search} and \emph{right search}. For the left search, it is assumed that the generator is in the following standard form: $\mathbf{G}_{\text{L}} = [\mathbf{I}_{k}\,|\,\mathbf{P}_{\text{L}}]$. The right search is identical, except that one considers $\mathbf{G}_{\text{R}} = [\mathbf{P}_{\text{R}}\,|\,\mathbf{I}_{k}]$.  The whole procedure is described in Algorithm \ref{alg:nearestcw}. This procedure avoids the need for exhaustive codeword enumeration to determine the nearest codeword(s) to a given candidate \ac{TS}. The exhaustive enumeration would be computationally infeasible for the considered code parameters. The following result holds.

\begin{prop}
Let $\mathcal{C}$ be a binary linear code with code rate $\frac{1}{2}$, and let $\mathbf{s}$ be an arbitrary binary input vector. Let $d = \min_{\mathbf{c} \in \mathcal{C}} d_{\mathrm{H}}(\mathbf{s}, \mathbf{c})$ be the minimum Hamming distance from $\mathbf{s}$ to the code $\mathcal{C}$. Then, for large enough $S_{\max}$, Algorithm~\ref{alg:nearestcw} is guaranteed to return $d$.%}
\end{prop}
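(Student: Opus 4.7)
The plan is to exploit the rate-$\tfrac{1}{2}$ systematic structure to turn the nearest-codeword search into two pigeonhole-bounded enumerations, each over a ball of bounded Hamming radius around a fixed half of $\mathbf{s}$. Write $\mathbf{G}_{\text{L}}=[\mathbf{I}_k\mid\mathbf{P}_{\text{L}}]$ so that every codeword has the form $\mathbf{c}=[\mathbf{u}\mid\mathbf{u}\mathbf{P}_{\text{L}}]$ with $\mathbf{u}\in\mathbb{F}_2^k$, and split $\mathbf{s}=[\mathbf{s}_{\text{L}}\mid\mathbf{s}_{\text{R}}]$ accordingly. Then
\[
d_{\mathrm{H}}(\mathbf{s},\mathbf{c})=d_{\mathrm{H}}(\mathbf{s}_{\text{L}},\mathbf{u})+d_{\mathrm{H}}(\mathbf{s}_{\text{R}},\mathbf{u}\mathbf{P}_{\text{L}}).
\]
The left search enumerates information words $\mathbf{u}$ within Hamming distance $S_{\max}$ of $\mathbf{s}_{\text{L}}$, evaluates the decomposition above for each, and records the minimum. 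The right search does the same using $\mathbf{G}_{\text{R}}=[\mathbf{P}_{\text{R}}\mid\mathbf{I}_k]$, now enumerating $\mathbf{u}'$ within Hamming distance $S_{\max}$ of $\mathbf{s}_{\text{R}}$. Algorithm~\ref{alg:nearestcw} then returns the smaller of the two outputs.

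The key step is to invoke the pigeonhole principle on any codeword $\mathbf{c}^\star$ achieving $d_{\mathrm{H}}(\mathbf{s},\mathbf{c}^\star)=d$. Writing $\mathbf{c}^\star=[\mathbf{u}^\star\mid\mathbf{u}^\star\mathbf{P}_{\text{L}}]$, we have
\[
d=d_{\mathrm{H}}(\mathbf{s}_{\text{L}},\mathbf{u}^\star)+d_{\mathrm{H}}(\mathbf{s}_{\text{R}},\mathbf{u}^\star\mathbf{P}_{\text{L}}),
\]
so at least one of these two summands is bounded above by $\lfloor d/2\rfloor$. Applying the dual decomposition $\mathbf{c}^\star=[\mathbf{u}^{\star\prime}\mathbf{P}_{\text{R}}\mid\mathbf{u}^{\star\prime}]$ yields the symmetric statement for the right search. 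Hence, whenever
\[
S_{\max}\ \geq\ \bigl\lfloor d/2\bigr\rfloor,
\]
either the information word $\mathbf{u}^\star$ lies in the left search ball around $\mathbf{s}_{\text{L}}$, or $\mathbf{u}^{\star\prime}$ lies in the right search ball around $\mathbf{s}_{\text{R}}$. In either case, the corresponding enumeration visits the correct half, the distance decomposition returns exactly $d$, and the final minimum output equals $d$.

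To conclude, I would argue in two directions: (i) \emph{upper bound}: because the algorithm reports distances of actual codewords, its output is always $\geq d$; (ii) \emph{lower bound}: by the pigeonhole argument above, at least one correct codeword is enumerated as soon as $S_{\max}\geq\lfloor d/2\rfloor$, so the output is $\leq d$. Since $d$ is bounded by the covering radius $D_{\mathcal C}$ defined in~\eqref{eq:DC}, it suffices to take $S_{\max}\geq\lfloor D_{\mathcal C}/2\rfloor$, which quantifies the qualifier ``large enough''. The main obstacle is purely notational rather than mathematical: reconciling the exact enumeration order and the precise form of the output variable in Algorithm~\ref{alg:nearestcw} with the decomposition used above. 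Once that bookkeeping is in place, the proof reduces to the pigeonhole step and the observation that minimization over a superset containing at least one optimal codeword yields the true minimum distance.
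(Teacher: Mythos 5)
Your proof takes essentially the same route as the paper's: split $\mathbf{s}$ and an optimal codeword into left and right halves, apply the pigeonhole principle to conclude that at least one half differs from the corresponding half of $\mathbf{s}$ in at most $d/2$ positions, and observe that the matching (left or right) systematic enumeration therefore visits that codeword, so the reported minimum equals $d$. One correction, though: in Algorithm~\ref{alg:nearestcw}, $S_{\max}$ is \emph{not} the enumeration radius --- the radius is set adaptively by the loop condition $w \leq \lceil d_s/2 \rceil$, with $d_s$ the best distance found so far --- but a cap on the number of stored nearest codewords whose violation triggers an early exit; ``large enough $S_{\max}$'' means that exit is never taken before the optimal codeword is enumerated, so your quantification $S_{\max} \geq \lfloor d/2 \rfloor$ is not the right one, even though the pigeonhole core of your argument is exactly the paper's and is correct.
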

\begin{proof} Let $\mathbf{c} \in \mathcal{C}$ be a codeword such that $d_{\mathrm{H}}(\mathbf{s}, \mathbf{c}) = d$. Partition both $\mathbf{s}$ and $\mathbf{c}$ into left and right halves:
\[
\mathbf{s} = (\mathbf{s}_{\mathrm{L}}, \mathbf{s}_{\mathrm{R}}), \quad \mathbf{c} = (\mathbf{c}_{\mathrm{l}}, \mathbf{c}_{\mathrm{r}}).
\]
 Since the total number of differing positions is $d$, it follows from the pigeonhole principle that at least one half differs in at most $\frac{d}{2}$ positions:
\[
\min\{d_{\mathrm{H}}(\mathbf{s}_{\mathrm{L}}, \mathbf{c}_{\mathrm{l}}),\ d_{\mathrm{H}}(\mathbf{s}_{\mathrm{R}}, \mathbf{c}_{\mathrm{r}})\} \leq \frac{d}{2}.
\]
Assume without loss of generality that $d_{\mathrm{H}}(\mathbf{s}_{\mathrm{L}}, \mathbf{c}_{\mathrm{l}}) \leq \frac{d}{2}$. The algorithm enumerates all message vectors whose encoded codewords have left halves within Hamming distance $w\leq \left\lceil \frac{d_s}{2}\right\rceil$ from $\mathbf{s}_{\mathrm{L}}$, where $ d_s\geq d$, so the corresponding message that encodes to $\mathbf{c}$ will be among the candidates. Thus, $\mathbf{c}$ will be evaluated by the algorithm, and since $d_{\mathrm{H}}(\mathbf{s}, \mathbf{c}) = d$ is minimal, the algorithm will identify it as achieving the minimum distance. The hypothesis that $S_{\max}$ is large enough prevents early stopping of the algorithm.
\end{proof}

\begin{algorithm}[t!]
\caption{Nearest Codeword Search Around a Nonzero Vector (NCS)}
\label{alg:nearestcw}
\begin{algorithmic}[1]
\renewcommand{\algorithmicrequire}{\textbf{Input:}}
    \renewcommand{\algorithmicensure}{\textbf{Output:}}
        \Require Code generator matrices $\mathbf{G}_{\text{L}} = [\mathbf{I}_{k} \mid \mathbf{P}_{\text{L}}]$, $\mathbf{G}_{\text{R}} = [\mathbf{I}_{k} \mid \mathbf{P}_{\text{L}}]$, binary vector $\mathbf{s}$ 
        \Ensure Set $\mathcal{S}$ of nearest codewords, distance $d$ to the nearest codewords
    \State $\mathbf{s} = (\mathbf{s}_{\text{L}}, \mathbf{s}_{\text{R}})$ \Comment{Split $\mathbf{s}$ into left and right halves}
    \Statex
    \State $\mathbf{c}_{\text{L}} \gets \mathbf{s}_{\text{L}}\,\mathbf{G}_{\text{L}}$ \Comment{Compute a \emph{reference} codeword}
    \State $\mathbf{u}_{\text{L}} \gets \mathbf{s} \oplus \mathbf{c}_{\text{L}}$ 
    \Statex
    \State $\mathbf{c}_{\text{R}} \gets \mathbf{s}_R\,\mathbf{G}_{\text{R}}$ \Comment{Compute a \emph{reference} codeword}
    \State $\mathbf{u}_{\text{R}} \gets \mathbf{s} \oplus \mathbf{c}_{\text{R}}$ 
    \Statex

    \State $\mathcal{S} \gets \emptyset$ \Comment{Set of nearest codewords}
    \State $d_{s} \gets n + 1$ \Comment{Current minimum distance}
        \Statex
    \State $w \gets 0$ 
\While{$w \leq \lceil d_{s}/2 \rceil$}
    \For{each $k$-bit vector $\mathbf{m}$ with $||\mathbf{m}|| = w$}
        \State $\mathbf{w}_{\text{L}} \gets (\mathbf{m},\,\mathbf{m}\mathbf{P}_{\text{L}})$ \Comment{Left search}
\State $d_{w,{\text{L}}} \gets \left\| \mathbf{w}_{\text{L}} \oplus \mathbf{u}_{\text{L}} \right\|$
\State $\mathbf{w}_{\text{R}} \gets (\mathbf{m}\mathbf{P}_{\text{R}},\,\mathbf{m})$ \Comment{Right search}
\State $d_{w,{\text{R}}} \gets \left\| \mathbf{w}_{\text{R}} \oplus \mathbf{u}_{\text{R}} \right\|$
\State $d_{w} \gets \min(d_{w,\text{L}},\,d_{w,\text{R}})$
\If{$d_w < d_{s}$}
    \State $\mathcal{S} \gets \emptyset$  
    \State $d_s \gets d_w$  \Comment{Update minimum distance}
\ElsIf{$d_w = d_{s}$}
    \If{$|\mathcal{S}| \ge S_{\max}$} 
\Comment{$S_{\max}$ = max \#}
\Statex \Comment{of nearest codewords}
        \State {\bf exit for}
    \EndIf
\EndIf
    \If{$d_{w,\text{L}} = d_w$}
        \State $\mathcal{S} \gets \mathcal{S} \cup \{\mathbf{w}_{\text{L}} \oplus \mathbf{u}_{\text{L}}\}$
    \EndIf
    \If{$d_{w,\text{R}} = d_w$}
        \State $\mathcal{S} \gets \mathcal{S} \cup \{\mathbf{w}_{\text{R}} \oplus \mathbf{u}_{\text{R}}\}$
    \EndIf
    \EndFor
    \State $w \gets w + 1$
\EndWhile
\State $d\gets d_s$
    \State \Return $\mathcal{S}, d$ \Comment{Return both the candidate set and the minimum distance from the nearest codewords}
\end{algorithmic}
\end{algorithm}

Some candidate \acp{DTS} for \acp{CLTU} encoded with the $(128,64)$ \ac{LDPC} code, obtained by running Algorithm \ref{alg:greedy}, are shown in Table \ref{tab:greedy}. Although the local search algorithm is heuristic and therefore its outputs are not provably optimal, we note that it can find sequences with a guaranteed minimum distance that is significantly larger than what is practically achievable using Algorithm \ref{alg:tail_sequence_searchfixed}. This substantial increase in the minimum distance has a significant impact on performance, as demonstrated in Section \ref{sec:numres}. 

\begin{table}[h]
\centering
\caption{Hexadecimal representations of candidate Receiver Tail Sequences obtained with Algorithm \ref{alg:greedy}}
\resizebox{\columnwidth}{!}{ 
\begin{tabular}{|c|c|}
\hline
 $D(\mathbf{v}_G,[0,2^k-1])$ & \textbf{Hexadecimal Representation of $\mathbf{v}$} \\
\hline
 $18$ & FFFFC000\,00000000\,00000000\,00000000 \\
 $19$ & 00008825\,008000A1\,A8402008\,2000C002\\
\hline
\end{tabular}
} 
\label{tab:greedy}
\end{table}

\subsubsection*{Considerations on computational complexity:}

To prevent endless loops caused by the algorithm revisiting previously explored sequences, Algorithm~\ref{alg:greedy} enforces an iteration cap of $T$.   
The computational complexity of each iteration is dominated by the search for the nearest codewords (Step 4). Indeed, apart from the computation of the word ${\bf c}$ (Steps 6 and 8), complexity of all other operations, including the initialization and the bit-flipping procedures, scales at most linearly with the block length~$n$. In general, complexity of the computation of the word $\mathbf{c}$ scales linearly with the size of $\hat{\mathcal{C}}$, which is upper-bounded by the parameter $S_{\max}$ in Algorithm~\ref{alg:nearestcw}, which can be chosen as $o(2^n)$. Note that $|\hat{\mathcal{C}}| = 1$ whenever $d_s \leq \left\lfloor \frac{d_{\min} - 1}{2} \right\rfloor$, in accordance with the error-correcting capability of the code.
So, focusing on the asymptotic behavior, the overall time complexity of  Algorithm \ref{alg:greedy} is  equivalent to that of Algorithm \ref{alg:nearestcw}. 
We thus analyze the computational complexity of Algorithm \ref{alg:nearestcw} (which incorporates ``left'' and ``right'' searches) with respect to $n$. The critical step in terms of time complexity is the enumeration of the sets of $k$-bit vectors $\mathbf{m} \in \mathbb{F}_2^k$ with Hamming weight $\|\mathbf{m}\| = w$ (lines 9-10, Algorithm \ref{alg:nearestcw}). Each of these sets has $\binom{k}{w}$ $k$-bit vectors and, for each of them, the algorithm computes the corresponding codeword suffixes $\mathbf{m} \mathbf{P}_{\text{L}}$ and prefixes $\mathbf{m} \mathbf{P}_{\text{R}}$)  in $O(w k)$ operations (each).
Therefore any main iteration requires 
\begin{equation}\nonumber
O\left(kw\binom{k}{w} \right)
\end{equation}
operations, resulting in a total number of operations bounded by 
\begin{equation}\label{eq:complexity}
O\left(k\sum_{w=0}^{\lceil d/2 \rceil}w\binom{k}{w}\right),
\end{equation}
where $d$ is the minimum Hamming distance from the current $\mathbf{s}$ to the code $\mathcal{C}$.
It is known that, for linear codes with dual distance (i.e., the minimum distance of the dual code) $d_{\perp}$, the overall $D_{\mathcal{C}}$ (defined in \eqref{eq:DC}) is at most $\frac{n}{2}-\left(\frac{1}{2}-o(1)\right)\sqrt{d_\perp n}$ \cite{Bazzi2019}.  
Equation (\ref{eq:complexity}) provides the time-complexity of Algorithm \ref{alg:nearestcw} expressed in terms of the block length $n$ by setting $k = Rn$ with $0 < R < 1$.
This expression highlights that the overall complexity is exponential in the message length $k$. However, Algorithm~\ref{alg:nearestcw} comes with an impressive boost compared to the straightforward nearest codewords search algorithm. The latter requires computing the syndrome of all vectors at Hamming distance up to $d$ from the received vector. Since each syndrome computation involves multiplying the $(n-k) \times n$ parity-check matrix $\mathbf{H}$ by a length-$n$ binary vector of weight $w$, its cost is $O((n-k)w)$. Therefore, the overall complexity of the naive search is 
\[
O\left(n(1-R)\sum_{w=0}^{d} w\binom{n}{w}\right).
\]
For example, to get a distance-$19$ candidate vector through Algorithm \ref{alg:nearestcw} for the considered $(128,64)$ \ac{LDPC} code, around $2\binom{64}{10}\approx 3\cdot 10^{11}$ encoding operations need to be performed. An exhaustive search would take, instead, $\binom{128}{19}\approx 2\cdot 10^{22}$ syndrome calculations.

\subsection{LRT-based detection for the Tail Sequence}\label{subsec:TS4LRT}

Even though both the start sequence and the \ac{TS} are detected by using the same LRT-based optimal approach, their  detection mechanisms differ significantly. According to the standard, the start sequence is a 64-symbol sequence, while the tail sequence consists of 128 symbols. The start sequence is detected using a sliding-window correlator that operates symbol-wise over the received stream, enabling detection at arbitrary positions. In contrast, tail detection can proceed block-wise: since the tail is expected to follow a decoded codeword, it is possible to perform correlation only at predetermined positions. Specifically, the termination detector working points are at offsets of $n$ symbols from the start sequence. 

In light of this, different from conventional start sequence construction, the design of the \ac{DTS} for \ac{LRT}-based detection need not prioritize the optimization of correlation properties or other deterministic properties such as run length, transition density, etc.; instead, it should focus on fulfilling other system-level requirements. For example, since the start sequence for \ac{LDPC}-coded transmissions is $64$ bits long, a good intuition would be that of transforming the last $64$ bits of the $128$-bits long \ac{DTS} into a sequence resembling as much as possible the idle sequence (which is an alternating sequence of $0$s and $1$s). This way, it would be possible to use  \ac{LRT}-based detectors working of segments of the same length for both start and tail sequences.

Any \ac{DTS} can be transformed by exploiting the linearity of the $(128,64)$ code, which implies that the Hamming distance enumerator function is the same taking any of the codewords as reference. Therefore, if we add a codeword to the \ac{DTS}, nothing changes from the decoder perspective. Through binary addition of the last $64$ bits of the \ac{DTS} to a $64$-bit long alternating sequence of $1$s and $0$s (the value of the starting bit can be chosen arbitrarily), a $64$-bit pattern $\mathbf{p}$ is found, which can be unambiguously associated with a codeword $\mathbf{c}^*$. Since the rate of the code is $1/2$, this is always possible. To obtain $\mathbf{c}^*$, it is indeed sufficient to consider $\mathbf{p}$ as an information word, and encode it through binary multiplication by the code generator matrix. 

We also compute the probability of false alarm and the probability of miss detection for the considered sequences using hard correlation, soft correlation, and the \ac{LRT} \cite{Chiani2006}. The probability of false alarm, denoted by \(P_\mathrm{fa}\), is the probability that the detector erroneously detects the presence of the termination sequence when it is absent. Conversely, the probability of miss detection, \(P_\mathrm{md}\), is the probability that the detector fails to detect the presence of the termination when it is actually present. The detection probability is denoted a $P_{\mathrm{d}}$. For hard correlation, the detection metric is given by
\[
\Lambda_\mathrm{H} = \left| \sum_{i=1}^{N} \operatorname{sgn}(r_i) \, v_i \right|,
\]
where \(r_i\) is the \(i\)-th received sample, \(v_i\) is the \(i\)-th element of the termination sequence (with \(v_i \in \{+1,-1\}\)) and $\operatorname{sgn}(\cdot)$ is the sign function. For soft correlation, the metric is instead defined as
\[
\Lambda_\mathrm{S} =  \left| \sum_{i=1}^{N} r_i \, v_i \right|.
\]
The optimal metric in the Neyman-Pearson sense for the \ac{AWGN} channel is given by the \ac{LRT} (often referred to as the Chiani-Martini metric), which is expressed as
\[
\Lambda_\mathrm{CM}= \sum_{i=1}^{N} \ln\left( 1 + e^{-\frac{2}{\sigma^2} \, v_i \, r_i } \right),
\]
where \(\sigma^2\) is the noise variance. In many implementations, the decision rule is then to declare the searched sequence present  if the metric exceeds a threshold, or (for the LRT metric, which decreases when the sequence is present) if it falls below a threshold. By comparing the \(P_\mathrm{fa}\) and \(P_\mathrm{md}\) values corresponding to these metrics, one can assess the detection performance under various noise conditions.

\begin{Exa}
    Consider the distance-$19$ \ac{DTS} shown in the last line of Table \ref{tab:greedy}, that is,
\[
\mathbf{v}= \mathrm{00008825\,008000A1\,A8402008\,2000C002}.
\]
We set the last $64$ symbols to match the idle pattern, through binary addition with the following codeword
\[
\mathbf{c}= \mathrm{6FA5DE77\,A89F2981\,FD15755D\,75559557}.
\]
The final  \ac{DTS} is
\begin{align*}
\mathbf{v}^*_{19} &= \mathbf{v} \oplus \mathbf{c} \\
             &= \mathrm{6FA55652\,A81F2920\,55555555\,55555555}.
\end{align*}

In Fig. \ref{fig:md128} we show the miss detection probability for $\mathbf{v}$ and $\mathbf{v}^*$, respectively. As anticipated, we consider hard correlation, soft correlation, and \ac{LRT}. For each value of the \ac{SNR} per information bit, $E_b/N_0$, the detection threshold is chosen by fixing the probability of false alarm to $10^{-5}$. We observe that the transformation from $\mathbf{v}$ to $\mathbf{v}^*_{19}$ does not affect the overall detection performance, but enables the option of performing detection on $64$ bits instead of $128$ bits, resulting in a slight computational and implementation advantage.

    \begin{figure}
        \centering
        \resizebox{\columnwidth}{!}{% This file was created by matlab2tikz.
%
%The latest updates can be retrieved from
%  http://www.mathworks.com/matlabcentral/fileexchange/22022-matlab2tikz-matlab2tikz
%where you can also make suggestions and rate matlab2tikz.
%
\definecolor{mycolor1}{rgb}{0.00000,0.44700,0.74100}%
\definecolor{mycolor2}{rgb}{0.85000,0.32500,0.09800}%
\definecolor{mycolor3}{rgb}{0.92900,0.69400,0.12500}%
\begin{tikzpicture}

\begin{axis}[%
width=4.521in,
height=3.531in,
at={(0.758in,0.516in)},
scale only axis,
xmin=-6,
xmax=0,
xlabel style={font=\color{white!15!black}},
xlabel={$E_b/N_0 \text{ (dB)}$},
ymode=log,
ymin=e-06,
ymax=0,
yminorticks=true,
ylabel style={font=\color{white!15!black}},
ylabel={$\text{Miss Detection Probability}$ $\text{(}P_{\mathrm{md-T}}\text{)}$},
axis background/.style={fill=white},
xmajorgrids,
ymajorgrids,
yminorgrids,
legend style={at={(1,1)},legend cell align=left, align=left, draw=white!15!black}
]
\addplot [color=mycolor1, line width=0.6mm,mark=o, mark options={solid, mycolor1}]
  table[row sep=crcr]{%
%-6 1.14E-01\\
-6 4.74E-01 \\    
-4 1.311E-01\\
-2 5.75E-03\\
%-4 5.586E-03\\
%-2 5E-06\\
};
\addlegendentry{Hard Correlation $\mathbf{v}^*$} %

\addplot [dashed,color=mycolor1, line width=0.6mm,mark=o, mark options={solid, mycolor1}]
  table[row sep=crcr]{%
  -6 5E-01\\
-4 1.27E-01\\
-2 5.72E-03\\
%-4 3.63E-03\\
%-2 4.8E-06\\
};
\addlegendentry{Hard Correlation $\mathbf{v}$}

\addplot [color=mycolor2,line width=0.6mm, mark=square, mark options={solid, mycolor2}]
  table[row sep=crcr]{%
%-6 2.054E-02\\
-6 2.26E-01\\
-4 2.66E-02\\
-2 3.98E-04\\
%  -4 1.392E-04\\
 % -2 1E-07  \\
};
\addlegendentry{Soft Correlation $\mathbf{v}^*$} %

\addplot [dashed , color=mycolor2,line width=0.6mm, mark=square, mark options={solid, mycolor2}]
  table[row sep=crcr]{%
 -6 2.192E-01\\
 -4 2.73E-02\\
-2 4.2E-04\\
 % -4 2.91E-04\\
 % -2 1E-07  \\
};
\addlegendentry{Soft Correlation $\mathbf{v}$}

\addplot [color=mycolor3,line width=0.6mm, mark=diamond, mark options={solid, mycolor3}]
  table[row sep=crcr]{%
%-6 2.545E-03\\
-6 9.97E-02\\
-4 3.64E-03\\
-2 2.9E-06\\
%-4 6E-07\\
};
\addlegendentry{LRT $\mathbf{v}^*$} %

\addplot [dashed,color=mycolor3,line width=0.6mm, mark=diamond, mark options={solid, mycolor3}]
  table[row sep=crcr]{%
 -6 1.13E-01   \\
 -4 3.88E-03\\
 -2 2.6E-06 \\
};
\addlegendentry{LRT $\mathbf{v}$}

\end{axis}

\begin{axis}[%
width=5.833in,
height=4.375in,
at={(0in,0in)},
scale only axis,
xmin=0,
xmax=1,
ymin=0,
ymax=1,
axis line style={draw=none},
ticks=none,
axis x line*=bottom,
axis y line*=left
]
\end{axis}
\end{tikzpicture}%}
     \caption{Comparison of miss detection probabilities using different  detection methods on $\mathbf{v}$ (original) and $\mathbf{v}^*$ (transformed to get second half as idle sequence)}
    \label{fig:md128}
\end{figure}
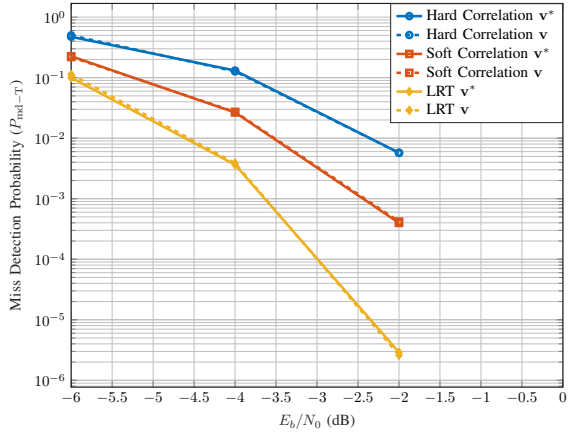

\label{exa:lrt}
\end{Exa}

\section{Performance evaluation}\label{sec:numres}

In this section, using theoretical and numerical arguments, we face the analysis of the most relevant metric for the considered application, that is, the \ac{TC} rejection probability, in the following noted as $P_{\mathrm{TCrej}}$. 

\subsection{Theoretical considerations} 

Given the typical form of a transmitted data stream, that is 
\[
\ldots1010101\,\mathrm{CLTU}_i\,101\ldots101\,\mathrm{CLTU}_{i+1}\,1010101\ldots,
\]
we can determine the reasons for the discard of a \ac{CLTU}, say the $i$-th, at the receiving end:
\begin{itemize}
\item The start sequence detector cannot detect the $i$-th preamble.
\item The start sequence detector generates a false alarm, i.e., it incorrectly detects a preamble where there is none.
\item The decoder does not correctly decode the \ac{CLTU}'s coded data.
\item The termination is not correctly detected.
\end{itemize}
Namely, regarding the last point, assuming that \acp{CLTU} contain the optional \ac{TS} and depending on the termination detection method, we  have that the $i$-th \ac{CLTU} is discarded:
\begin{itemize}
    \item if a decoder-based approach is used for detection,  when the decoder mistakes the $i$-th  \ac{DTS} for a codeword;
    \item if an \ac{LRT}-based (or correlator-based) approach  is used for detection, when:
    \begin{itemize}
        \item the \ac{DTS} detector misses the $i$-th \ac{DTS} AND the decoder mistakes the $i$-th   \ac{DTS} for a codeword;
        \item the \ac{DTS} detector incorrectly signals the presence of the termination sequence when there is none (false alarm).
    \end{itemize}
\end{itemize}

The conditions above can be translated into an explicit analytical formulation. For such a purpose, it is useful to define the following parameters:
\begin{itemize}
    \item $P_{\mathrm{md-S}}$ is the probability of miss detection of the start sequence;
    \item $P_{\mathrm{md-T}}$ is the probability of miss detection of the \ac{TS} (this notation has been already introduced in Fig. \ref{fig:md128}). The corresponding detection probability is denoted as $P_{\mathrm{d-T}}$;
    \item $P_{\mathrm{fa-S}}$ is the probability of false alarm on a sequence formed by at least one symbol of the idle sequence and at least one symbol of the start sequence;
    \item $P_{\mathrm{fa-T}}$ is the probability of false alarm on a codeword;
    \item the \ac{CER} estimates the probability that the decoder fails  to decode a noisy codeword;
    \item $P_{\mathrm{ds-T}}$ is the probability that the decoder erroneously decodes a noisy \ac{DTS} into a valid codeword;
    \item $P_{\mathrm{ds-I}}$ is the probability that the decoder mistakes the first $128$ bits after the last codeword for another codeword, if no termination sequence is used.
\end{itemize}

We observe that all these probabilities can be collected into four groups: failures happening on data (partially) preceding a \ac{CLTU} ($P_{\mathrm{fa-S}}$), failures happening on the start sequence ($P_{\mathrm{md-S}}$), failures happening on coded data ($P_{\mathrm{fa-T}}$ and \ac{CER}), and failures happening on the \ac{DTS} ($P_{\mathrm{md-T}}$ and $P_{\mathrm{ds-T}}$). 

\begin{Rem}
    In the following, we pessimistically assume that all the \acp{CLTU} contain the maximum allowed number of codewords, i.e., $N$. The actual probability that the decoder fails on retrieving a codeword on any \ac{CLTU} is smaller than or equal to $1-(1-\mathrm{CER})^N$.
    \label{rem:upperbounds}
\end{Rem}

Through simple algebra, on the basis of the analysis above, it is easy to verify that, if a decoder based-approach is employed for \ac{CLTU} termination, then

\begin{align}
\nonumber
P_\mathrm{TCrej}^{\mathrm{DB}} &\leq P_{\mathrm{fa-S}} + (1-P_{\mathrm{fa-S}}) \\ \nonumber
&\quad \times \Bigg\{ P_{\mathrm{md-S}} + (1-P_{\mathrm{md-S}}) \Big[ 1-(1-\mathrm{CER})^N \\ \nonumber
&\quad\quad + (1-\mathrm{CER})^N P_{\mathrm{ds-T}} \Big] \Bigg\}.\\ 
\label{eq:tcrejdb}
\end{align}

Instead, assuming that the \ac{LRT}-based detector is employed and that its failure events are independent from those of the decoder (an assumption required because, after detecting the preamble, both the \ac{LRT} detector and the decoder operate on the same received symbols), which is reasonable given that they rely on fundamentally different decision processes, we obtain:
\begin{align}
\nonumber
P_\mathrm{TCrej}^{\mathrm{LRTB}} &\leq P_{\mathrm{fa-S}} \\ \nonumber
&\quad + (1-P_{\mathrm{fa-S}}) \Bigg\{ P_{\mathrm{md-S}}+ (1-P_{\mathrm{md-S}}) \\ \nonumber
&\quad\quad  \times \Big[ 1 - \big((1-P_{\mathrm{fa-T}})(1-\mathrm{CER})\big)^N \\ \nonumber
&\quad\quad\quad + \big((1-P_{\mathrm{fa-T}})(1-\mathrm{CER})\big)^N  \\ \nonumber
&\quad\quad\quad \times P_{\mathrm{ds-T}} P_{\mathrm{md-T}} \Big]\Bigg\}.\\
\label{eq:tcrejlrtb}
\end{align}

To complete the picture and for the sake of comparison, we also explicitly express the \ac{TC} rejection probability for a hypothetical scenario in which \acp{CLTU} do not contain \acp{TS}, and termination is detected directly by the decoder. In this case, we get the following:
\begin{align} 
\nonumber
P_\mathrm{TCrej}^{\mathrm{noTS}} &\leq P_{\mathrm{fa-S}} + (1-P_{\mathrm{fa-S}}) \\ \nonumber
&\quad \times \Bigg\{ P_{\mathrm{md-S}} + (1-P_{\mathrm{md-S}}) \Big[ 1-(1-\mathrm{CER})^N \\ \nonumber
&\quad\quad + (1-\mathrm{CER})^N P_{\mathrm{ds-I}} \Big] \Bigg\}.\\
\label{eq:tcrejnots}
\end{align}
We have run Monte Carlo simulations considering $8$ bits, $16$ bits and $32$ bits long idle sequences. The results, not reported in this paper for the sake of brevity, show a rather weak dependence of the system's performance on the idle sequence length. Therefore, for the sake of ease, we decided to assume that the \ac{CLTU} is followed by a sufficiently long idle sequence, so that the information data are also secure from eventual delays. Under this assumption, the decoder is expected to fail when processing this idle sequence alone.

In \cite{GiulianiAccess2025}, the theoretical probability of missed detection is derived under the assumption that the start sequence can tolerate a certain number of hard bit-flipping errors. In contrast, this paper relaxes that assumption by considering more realistic detection algorithms, which operate on soft information. As a result, all the quantities in the above equations are obtained through Monte Carlo simulations. If only hard bit-flipping errors were present, closed-form expressions for the probabilities of missed detection (for the start and tail sequences) could be used instead.

\subsection{Numerical assessment}

Let us fix a requirement on the \ac{TC} rejection probability $\eta=10^{-5}$. Based on the findings in \cite{GiulianiAccess2025}, which provide an in-depth analysis of the \ac{CCSDS} system performance, we assume that restricting the study of the \ac{SNR} range to between $5.5$ and $6.5$ dB is sufficient to obtain accurate results. Regarding the probabilities of false alarm and miss detection, we can assume that in the considered \ac{SNR} range they are negligible, especially if \ac{LRT} detection is used, as shown in Example \ref{exa:lrt} for termination sequences and in \cite[Fig. 3]{Baldi2016} for the start sequence. In Fig. \ref{fig:combinedROC}, we show the \ac{ROC} curves corresponding to the detection performance, considering different metrics, of the first $64$ bits\footnote{We remind that we designed this \ac{TS} in such a way that the same detector can be used for both the start sequence, which is 64-bits long, and the first half of the \ac{TS}.} of the \ac{TS} with $D(\mathbf{v}_G,[0,2^k-1])=19$. The values of false alarm and detection probability were obtained by simulating the transmission of $10^7$ \ac{BPSK}-modulated random codewords and termination sequences over the \ac{AWGN} channel, respectively. We consider hard correlation, soft correlation, and the \ac{LRT} for $E_b/N_0 \in \{-2, 0, 2\}$~dB. Each point on the resulting ROC curves corresponds to a different detection threshold value. As expected, the \ac{LRT} consistently outperforms soft correlation, which in turn outperforms hard correlation across all considered operating conditions.

\begin{figure*}
    \centering
    \begin{subfigure}[b]{0.48\textwidth}
        \centering
        \resizebox{\textwidth}{!}{\input{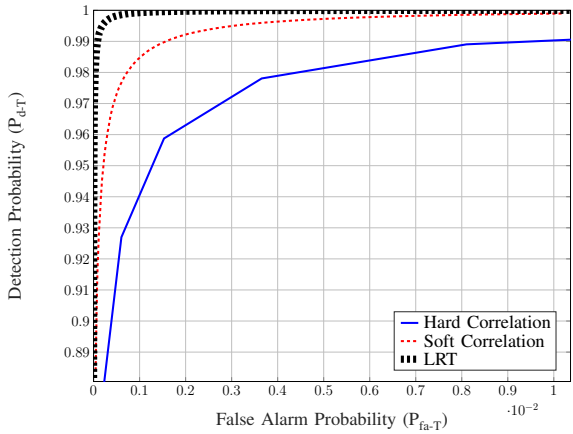}}
        \caption{ROC at $E_b/N_0 = -2$ dB}
        \label{fig:ROC5}
    \end{subfigure}
    \hfill
    \begin{subfigure}[b]{0.48\textwidth}
        \centering
        \resizebox{\textwidth}{!}{\input{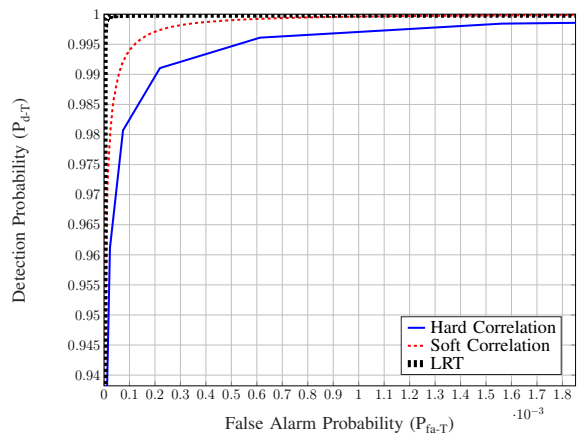}}
        \caption{ROC at $E_b/N_0 = 0$ dB}
        \label{fig:ROC3}
    \end{subfigure}
    
    \begin{subfigure}[b]{0.48\textwidth}
        \centering
        \resizebox{\textwidth}{!}{\input{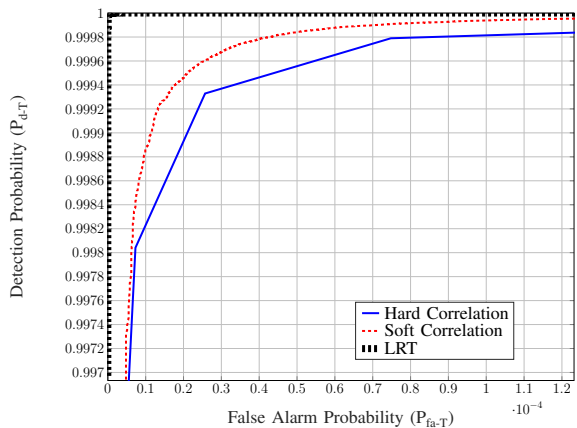}}
        \caption{ROC at $E_b/N_0 = 2$ dB}
        \label{fig:ROC1}
    \end{subfigure}
    
    \caption{\ac{ROC} curves for the termination detection performance on the first $64$ bits of $\mathbf{v}_{19}^*$, considering different metrics and different values of $E_b/N_0$}
    \label{fig:combinedROC}
\end{figure*} 

The \ac{CER} of the $(128,64)$ \ac{LDPC} code has already been assessed under several iterative decoding algorithms in the literature (see, for example, \cite[Figures 11-13]{greenbook}  and \cite[Figure 5]{GiulianiAccess2025}). 
To assess the termination detection performance, the decoder we consider in this paper is the \ac{MSA}, running at most $100$ decoding iterations. We also consider $664$ kB \acp{CLTU}, i.e., $N=40$, when the optional \ac{TS} is included. If not, \acp{CLTU} are $128$ bits shorter.  We run Monte Carlo simulations for the case of \ac{BPSK}-modulated transmissions over the \ac{AWGN} channel, to estimate the decoding success rate when some of the newly proposed \acp{TS} are transmitted ($P_{\mathrm{ds-T}}$), and when an idle sequence of length $128$ is transmitted ($P_{\mathrm{ds-I}}$). The simulation is stopped upon encountering $100$ detection errors. This way, using \eqref{eq:tcrejdb}, \eqref{eq:tcrejlrtb} and \eqref{eq:tcrejnots}, we are able to provide reliable performance in the scenarios of interest: termination detection with a decoder, termination detection with an \ac{LRT} detector/correlator, \ac{TS} not used and (randomized) idle sequence detected with either the decoder or an \ac{LRT}-based detector, respectively. The considered simulation parameters are summarized in Table \ref{tab:simparams}.

\begin{table*}[t!]
\centering
\caption{Main parameters used for numerical experiments}
\label{tab:simparams}
\begin{tabular}{|l|l|}
\hline
\textbf{Parameter} & \textbf{Value} \\
\hline
Target TC rejection probability ($\eta$) & $10^{-5}$ \\
$E_b/N_0$ range & $5.5$ to $6.5$ dB \\
Channel model, modulation scheme & AWGN, BPSK\\
Code & $(128,64)$ LDPC code \cite{bluebook}\\
Decoding algorithm & Min-Sum Algorithm, max 100 iterations \\
Simulation stopping rule & 100 detection errors (per configuration) \\
Detection methods & Hard correlation, soft correlation, LRT \\
Maximum CLTU length & 664 kB with TS; 128 bits shorter without TS \\
\hline
\end{tabular}
\end{table*}

For \ac{CLTU} termination, we consider:
\begin{itemize}
    \item no \ac{TS} is employed. Therefore, we feed the decoder with a  noisy sequence of $128$ alternating $0$s and $1$s (idle sequence), randomized through binary addition with the first $128$ bits generated by the \ac{LFSR}, given in \eqref{eq:lfsr}. In hexadecimal, we have:
    \[
    \mathrm{AA6CCB0F\,3DBC53A0\,39DC7AF4\,640B5D95};
    \]
    \item the \ac{TS} proposed in the standard is employed. In the encapsulation phase, the following sequence is thus included after coded data:
    \[
    \mathrm{55555556\,AAAAAAAA\,55555555\,5555 5555};
    \]
    \item a \ac{TS} for which $D(\mathbf{v}_G,[0,2^k-1])=12$, obtained through Algorithm \ref{alg:tail_sequence_searchfixed}. In the encapsulation phase, the following sequence ($\mathbf{v}_{12}$ in Fig. \ref{fig:combined}) is thus included after coded data:
    \[
     \mathrm{0BF39BF5\,948E906F\,5EC8B2FA\,1314E08C};
    \]
    \item a \ac{TS} for which $D(\mathbf{v}_G,[0,2^k-1])=18$, obtained through Algorithm \ref{alg:greedy}, is employed. In the encapsulation phase, the following sequence ($\mathbf{v}_{18}$ in Fig. \ref{fig:combined}) is thus included after coded data:
    \[
    \mathrm{00C65E5A\,68E906F5\,6C892FA1\,315E08C0};
    \]
    \item a \ac{TS} for which  $D(\mathbf{v}_G,[0,2^k-1])=19$ is employed, corresponding to the design in Section \ref{sec:tsdes}-\ref{subsec:TS4LRT}. In the encapsulation phase, the following sequence ($\mathbf{v}^*_{19}$ in Figs. \ref{fig:combinedROC} and \ref{fig:combined}) is thus included after coded data:
    \[
    \mathrm{909CC808\,C0F62FD5\,39DC7AF4\,640B5D95}.
    \]
\end{itemize}

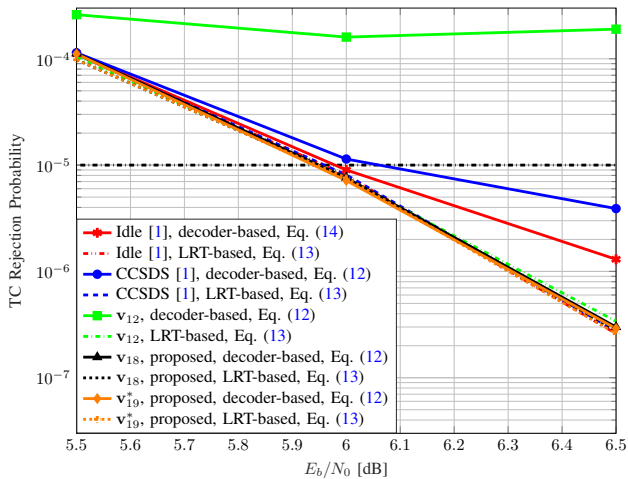
\begin{figure}
        \centering
        \resizebox{\columnwidth}{!}{\begin{tikzpicture}
        \begin{axis}[%
        width=4.521in,
        height=3.566in,
        scale only axis,
        unbounded coords=jump,
        xmin=5.5,
        xmax=6.5,
        xlabel style={font=\color{white!15!black}},
        xlabel={$E_b/N_0$ [dB]},
        ymode=log,
        ymin=3e-08,
        ymax=3e-04,
        yminorticks=true,
        ylabel style={font=\color{white!15!black}},
        ylabel={TC Rejection Probability},
        axis background/.style={fill=white},
        title style={font=\bfseries},
        xmajorgrids,
        ymajorgrids,
        yminorgrids,
        legend style={at={(-0,0)}, anchor=south west, legend cell align=left, align=left, draw=white!15!black,fill=white, 
    fill opacity=1, 
    draw opacity=1,  % Ensures the border remains fully opaque
    text opacity=1   % Ensures the text remains fully opaque
    }
        ]

     % Add the horizontal line at P = 10^-5
    \draw[black,  dash dot, line width=0.6mm] (axis cs:1,1e-5) -- (axis cs:7,1e-5);

\addplot [color=red,mark=asterisk, mark size=3pt,line width=0.6mm]
          table[row sep=crcr]{%
          5.5 0.1115E-03\\
          6 0.009E-3\\
          6.5 0.0013E-03\\
        };
        \addlegendentry{Idle \cite{bluebook}, decoder-based, Eq. \eqref{eq:tcrejnots}}

        \addplot [color=red, dash dot dot, line width=0.6mm]
          table[row sep=crcr]{%
          5.5 0.1101E-03\\
          6 0.0080E-03\\
          6.5  0.00026E-03\\
         };
        \addlegendentry{Idle \cite{bluebook}, LRT-based, Eq. \eqref{eq:tcrejlrtb}}

        \addplot [color=blue, mark=*,line width=0.6mm]
          table[row sep=crcr]{%
          5.5 0.1139E-03\\
          6 0.0114E-03\\
          6.5 0.0039E-03\\
        };
        \addlegendentry{CCSDS \cite{bluebook}, decoder-based, Eq. \eqref{eq:tcrejdb}}

        \addplot [color=blue, dashed,line width=0.6mm]
          table[row sep=crcr]{%
           5.5 0.11E-03\\
          6 0.00810E-03\\
          6.5  0.000275E-03\\
        };
        \addlegendentry{CCSDS \cite{bluebook}, LRT-based, Eq. \eqref{eq:tcrejlrtb}}

        \addplot [color=green, mark=square*,line width=0.6mm]
          table[row sep=crcr]{%
          5.5 2.6E-04 \\
          6 1.6E-04\\
          6.5 1.9E-04\\
        };
        \addlegendentry{$\mathbf{v}_{12}$, decoder-based, Eq. \eqref{eq:tcrejdb}  }

        \addplot [color=green, dash dot,line width=0.6mm]
          table[row sep=crcr]{%
          5.5 0.105E-03\\
          6 0.00740E-03\\
          6.5  0.00034E-03\\
        };
        \addlegendentry{$\mathbf{v}_{12}$, LRT-based, Eq. \eqref{eq:tcrejlrtb}}

        \addplot [black,line width=0.6mm,mark=triangle*]
          table[row sep=crcr]{%
          5.5 0.1125E-03\\
          6 0.0076E-03\\
          6.5 0.0003E-03 \\
        };
        \addlegendentry{$\mathbf{v}_{18}$, proposed, decoder-based, Eq. \eqref{eq:tcrejdb}}

        \addplot [color=black, dotted,line width=0.6mm]
          table[row sep=crcr]{%
          5.5 0.098E-03\\
          6 0.00760E-03\\
          6.5  0.00027E-03\\
        };
        \addlegendentry{$\mathbf{v}_{18}$, proposed, LRT-based, Eq. \eqref{eq:tcrejlrtb}}

        \addplot [orange,line width=0.6mm,mark=diamond*]
          table[row sep=crcr]{%
          5.5 0.11125E-03\\
          6 0.00716E-03\\
          6.5 0.00029E-03 \\
        };
        \addlegendentry{$\mathbf{v}^*_{19}$, proposed, decoder-based, Eq. \eqref{eq:tcrejdb}}

\addplot [color=orange, mark=+,dotted,line width=0.6mm]
          table[row sep=crcr]{%
          5.5 0.098E-03\\
          6 0.00760E-03\\
          6.5  0.00026E-03\\
        };
        \addlegendentry{$\mathbf{v}^*_{19}$, proposed, LRT-based, Eq. \eqref{eq:tcrejlrtb}}

        \end{axis}
        \end{tikzpicture}}
    \caption{Comparison of \ac{TC} rejection probabilities using different termination and detection methods}
    \label{fig:combined}
\end{figure}

\begin{table*}[t!]
\centering
\caption{Termination sequences used for  performance evaluation}
\label{tab:termination_sequences}
\renewcommand{\arraystretch}{1.2}
\begin{tabular}{|c|l|c|p{8.2cm}|}
\hline
\textbf{Label} & \textbf{Description} & \textbf{Hexadecimal Sequence (128 bits) included in encapsulation} \\
\hline
Idle & No TS (LFSR-randomized idle pattern) & \texttt{AA6CCB0F\,3DBC53A0\,39DC7AF4\,640B5D95} \\
CCSDS & Standard TS from \cite{bluebook} &  \texttt{55555556\,AAAAAAAA\,55555555\,55555555} \\
$\mathbf{v}_{12}$ & Proposed TS with Hamming distance 12 & \texttt{0BF39BF5\,948E906F\,5EC8B2FA\,1314E08C} \\
$\mathbf{v}_{18}$ & Proposed TS with Hamming distance 18 & \texttt{00C65E5A\,68E906F5\,6C892FA1\,315E08C0} \\
$\mathbf{v}^*_{19}$ & Proposed TS with Hamming distance 19 & \texttt{909CC808\,C0F62FD5\,39DC7AF4\,640B5D95} \\
\hline
\end{tabular}
\end{table*}

The sequences employed for performance evaluation are reported in Table \ref{tab:termination_sequences}.  From Fig. \ref{fig:combined}, where the threshold $\eta$ is represented by the horizontal black dot-dashed line,  we can draw some very important conclusions:
\begin{itemize}
    \item the performance of the \ac{LRT}-based approach does not depend on the \ac{TS}, in the considered range of $E_b/N_0$; therefore, the colored dotted lines are barely distinguishable;
    \item the system  proposed in \cite{bluebook} has a better performance without \ac{TS} than with the recommended one. Some insights on this issue have been provided in \cite{GiulianiAccess2025};
    \item the fact that the Hamming weight of the \ac{TS} with guaranteed minimum Hamming distance from codewords is relatively low helps the decoder in converging towards the all-zero codeword. Therefore, performance is not as good as the other \acp{TS} with higher values of $D(\mathbf{v})$; 
\item while the performance gain of the optimal \ac{TS} sequence may appear limited for a rejection probability of $10^{-5}$, it becomes more significant at lower rejection rate targets. Since the optimal search is performed off-line and only once, the additional complexity is justified, especially in mission scenarios with stricter reliability requirements;
    \item most important, with a properly designed \ac{TS} with the decoder-based approach, it is possible to obtain the same performance of the \ac{LRT}-based approach. This result represents the main contribution of our work from a practical point of view. In fact, using a decoder-based approach results in savings in terms of computational complexity. In fact, a key distinction arises when comparing \ac{LRT}-based and decoder-based detection methods, namely:
\begin{itemize}
    \item with \ac{LRT}-based \ac{TS} detection, both the start sequence detector and the \ac{TS} detector analyze the received symbol stream, to find preambles and terminations. The decoder remains active only during the presence of a \ac{CLTU};
    \item with decoder-based \ac{TS} detection, the decoder remains active only when processing \acp{CLTU}. However, only a start sequence detector is employed, operating solely during periods without \acp{CLTU}.
\end{itemize}
The above reasoning clearly highlights that a system employing \ac{LRT}-based \ac{TS} detection bears the presence of an additional device, compared to the decoder-based \ac{TS} detection, which inevitably results in additional complexity. 
\end{itemize}

\section{Conclusions and future works}\label{sec:concl}

This work addressed the problem of designing and analyzing \ac{TC} synchronization in space data systems employing short \ac{LDPC} codes. Through theoretical analysis and numerical simulations, we have demonstrated that well-designed \acp{TS} can accommodate different detection strategies, including correlators, \acp{LRT}, and decoder-based approaches. The proposed \ac{TS} design ensures robustness under all these techniques. Furthermore, we have shown that decoder-based detection can achieve \ac{LRT}-equivalent performance, allowing for a simpler, lower-latency, and more efficient on-board implementation. This is a significant advantage, as minimizing receiver complexity is essential for space-based \ac{TC} processing.

As a spark for future works, we conjecture that, if the $E_b/N_0$ working point is high, looking for a word which is distant from all the codewords might not be necessary. In fact, it is well-known that in the high \ac{SNR} region, the performance of iterative decoders suffers from the presence of some harmful objects (i.e., trapping sets, absorbing sets, pseudo-codewords, etc.) \cite{Richardson2003a,Koetter2003,Battaglioni2023Mitch}. These objects are not necessarily distant from codewords, and depend on the considered code and decoder.

Another research line which might affect the \ac{TS} design deals with sequence detection using artificial intelligence. In \cite{deSouza2023,Magrin2019,Khan2024}, the problem of preamble detection in satellite communications is analyzed. Whether these techniques work well for termination detection is still an open question. Joint decoding and detection based on machine learning techniques is proposed in \cite{battaglioni2025mltail}. However, the method has been validated only for a specific value of $E_b/N_0$ and maximum number of decoding iterations, and its general validity is still uncertain.

\section*{Acknowledgment}

The authors heartily thank Dr. Paolo Santini for providing the software related to \cite{stern1989method}.

\bibliographystyle{IEEEtran}
\bibliography{bibliography}

\end{document}